\documentclass[journal,final,twocolumn,twoside]{IEEEtran}
\usepackage{amsmath,amsfonts}
\usepackage{algorithmic}
\usepackage{algorithm}
\usepackage{amssymb}
\usepackage{subfigure}
\usepackage{flushend}

\allowdisplaybreaks[4]
\usepackage{float} 

\usepackage{color}
\usepackage{array}
\usepackage[caption=false,font=normalsize,labelfont=sf,textfont=sf]{subfig}
\usepackage{textcomp}
\usepackage{stfloats}
\usepackage{url}
\usepackage{bm}
\usepackage{verbatim}
\usepackage{multirow} 
\usepackage{ntheorem}
\usepackage{graphicx}
\usepackage{cite}
\hyphenation{op-tical net-works semi-conduc-tor IEEE-Xplore}
\theorembodyfont{\upshape}
\theoremheaderfont{\it}
\qedsymbol{\square}
\theoremseparator{:}
\newtheorem{theorem}{\indent Theorem}
\newtheorem{lemma}{\indent Lemma}
\newtheorem*{proof}{\indent Proof}
\newtheorem{remark}{\indent Remark}

\newtheorem{proposition}{\indent Proposition}

\makeatletter

\newcommand{\Rmnum}[1]{\expandafter\@slowromancap\romannumeral #1@}
\makeatother
\begin{document}

\makeatletter
\let\myorg@bibitem\bibitem
\def\bibitem#1#2\par{%
  \@ifundefined{bibitem@#1}{%
    \myorg@bibitem{#1}#2\par
  }{%
    \begingroup
      \color{\csname bibitem@#1\endcsname}%
      \myorg@bibitem{#1}#2\par
    \endgroup
  }%
}
\makeatother 

\title{Energy-Efficient Edge Inference in Integrated Sensing, Communication, and\\Computation Networks}

\author{Jiacheng~Yao,~\IEEEmembership{Graduate Student Member,~IEEE},
        Wei Xu,~\IEEEmembership{Fellow,~IEEE}, 
        Guangxu~Zhu,~\IEEEmembership{Member,~IEEE},
        Kaibin~Huang,~\IEEEmembership{Fellow,~IEEE},
        and~Shuguang~Cui,~\IEEEmembership{Fellow,~IEEE}
\thanks{J. Yao and W. Xu are with the National Mobile Communications Research Laboratory, Southeast University, Nanjing 210096, China, and are also with the Purple Mountain Laboratories, Nanjing 211111, China (e-mail: \{jcyao, wxu\}@seu.edu.cn).}
\thanks{G. Zhu is with Shenzhen Research Institute of Big Data, Shenzhen, China (e-mail: gxzhu@sribd.cn).}
\thanks{K. Huang is with the Department of Electrical and Electronic Engineering, The University of Hong Kong, Hong Kong SAR, China (e-mail: huangkb@hku.hk).}
\thanks{S. Cui is with the School of Science and Engineering, The Chinese University of Hong Kong, Shenzhen, China (e-mail: shuguangcui@cuhk.edu.cn).}
}

%

\maketitle

\begin{abstract}
Task-oriented integrated sensing, communication, and computation (ISCC) is a key technology for achieving low-latency edge inference and enabling efficient implementation of artificial intelligence (AI) in industrial cyber-physical systems (ICPS). However, the constrained energy supply at edge devices has emerged as a critical bottleneck. In this paper, we propose a novel energy-efficient ISCC framework for AI inference at resource-constrained edge devices, where adjustable split inference, model pruning, and feature quantization are jointly designed to adapt to diverse task requirements. A joint resource allocation design problem for the proposed ISCC framework is formulated to minimize the energy consumption under stringent inference accuracy and latency constraints. To address the challenge of characterizing inference accuracy, we derive an explicit approximation for it by analyzing the impact of sensing, communication, and computation processes on the inference performance. Building upon the analytical results, we propose an iterative algorithm employing alternating optimization to solve the resource allocation problem. In each subproblem, the optimal solutions are available by respectively applying a golden section search method and checking the Karush-Kuhn-Tucker (KKT) conditions, thereby ensuring the convergence to a local optimum of the original problem. Numerical results demonstrate the effectiveness of the proposed ISCC design, showing a significant reduction in energy consumption of up to 40\% compared to existing methods, particularly in low-latency scenarios.
\end{abstract}

\begin{IEEEkeywords}
Edge inference, edge artificial intelligence (AI), industrial cyber-physical systems (ICPS), integrated sensing, communication and computation (ISCC). 
\end{IEEEkeywords}

\section{Introduction}
\IEEEPARstart{T}{he} rapid advancement of industrial cyber-physical systems (ICPS) has ushered in an era where the seamless integration of sensing, communication, and computation is paramount \cite{editor,icps1,icps0}. These systems form the backbone of modern smart industries, enabling real-time monitoring, control, and automation of industrial processes. However, the escalating complexity and stringent performance requirements of ICPS necessitate a paradigm shift in their design and implementation. There exists a fundamental conflict between limited processing capabilities and the increasing volume of real-time sensing data. Hence, artificial intelligence (AI) technology, renowned for its transformative capabilities, is expected to play a pivotal role in enhancing the capabilities of ICPS by enabling sophisticated data analysis \cite{xw}, predictive maintenance \cite{icps2}, anomaly detection \cite{editor1}, and autonomous decision-making \cite{icps3}.

Traditionally, deploying AI in ICPS has relied on powerful computational resources of centralized cloud servers. This approach involves transmitting vast amounts of raw data to the cloud for processing and analysis. While effective in leveraging advanced AI models, this method has inherent drawbacks, including high latency, significant bandwidth consumption, and potential privacy concerns. The need to transmit large volumes of data to central servers makes it difficult to meet the real-time and privacy requirements of industrial applications. To address these limitations, a new paradigm known as edge intelligence has emerged, which focuses on utilizing local data and computational resources at the edge of the network \cite{zhu2023pushing,wcm,intelledge,xu2023toward}. Edge intelligence aims to perform AI training and inference closer to the site where data is generated, thereby reducing latency, conserving bandwidth, and enhancing data privacy \cite{zhyang,gomore,imperfect,sciencechina}. In particular, edge inference, a key area in edge intelligence, concerns real-time inference and decision-making using a well-trained edge-deployed AI model, providing a platform for intelligent control in ICPS~\cite{edgeind}.

In the earliest edge inference architectures, edge devices typically serve a single role in either communication or computation. Inference tasks are performed either locally on edge devices or at central servers with preprocessed data from edge nodes. However, both approaches struggle to meet the demand for low-latency decision making in ICPS. Limited computing power on edge devices hampers fast inference of large-scale models, while constrained wireless resources restrict rapid data uploading to the server. These challenges are driving the emergence of split inference and further the joint design of communication and computation at the edge devices \cite{shao}. Specifically, large-scale AI models are split into two sub-models before being deployed respectively on edge devices and the servers. The sub-model at the edge extracts low-dimensional features from high-dimensional raw data, which can be transmitted to the server with limited bandwidth for completion of inference tasks. Within the framework of split inference, the communication and computation processes are tightly coupled, jointly determining the inference performance. Therefore, a joint design coordinated for both aspects becomes essential. To achieve the optimal trade-off between communication and computation, recent works have focused on various techniques, including adaptive splitting point selection \cite{flop,szbi}, model compression \cite{shao}, and setting early exiting \cite{exit0,exit}. Furthermore, taking into account the bottleneck of limited communication resource, advanced techniques, like joint source-channel coding \cite{shao2} and progressive feature transmission \cite{lanqiao}, are exploited to minimize communication overhead.

However, the aforementioned edge inference designs overlook the data collection process, despite that the quality of collected samples plays a cruial role in the resulting inference performance. In ICPS, seamless integration of sensing, communication, and computation collectively determines the efficacy of intelligent control. Therefore, conducting a joint design for integrated sensing, communication, and computation (ISCC) becomes imperative \cite{ISCCmag}. Benefiting from emerging technologies like integrated sensing and communication (ISAC), edge devices equipped with dual-functional hardware systems have ability to simultaneously sense, communicate, and compute \cite{fliu,zyhe}, which provides an essential foundation for the ISCC design. However, in ISAC, sensing and communication coexist as parallel tasks. In contrast, under ISCC, sensing, communication, and computation are integral procedures of inference tasks, not only coexisting functionally but also being interdependent in achieving the same task objective. Moreover, the absence of explicit objective expressions in inference tasks complicates the joint design of ISCC, making it more challenging than that in ISAC. 

Recently, there have been a few works considering the ISCC design for edge inference tasks \cite{gdyu,ISCC,AirISCC,zmzhuang}. For an action recognition task, the authors in \cite{gdyu} proposed a novel ISCC framework with sensing at the base station (BS) and AI inference at the edge server, and maximized the sensing performance with quality-of-service (QoS) requirements of tasks. For general classification tasks, the authors in \cite{ISCC} proposed a joint resource allocation scheme for ISCC directly targeting the inference accuracy measured by discriminant gain. This metric allows tractable derivation of the inference accuracy as a function of sensing, communication, and computation (SCC) resources, thus unveiling the impact of SCC resources on the inference performance. Furthermore, to fully exploit the multi-view features from multiple devices, over-the-air computation (AirComp) was employed for efficient feature fusion in \cite{AirISCC} and \cite{zmzhuang}, where joint resource allocation and beamforming design were further optimized in terms of the derived discriminant gain. From a theoretical perspective, the authors in \cite{mview} proposed an analytical framework for quantifying the fundamental performance gains brought by over-the-air aggregation of multi-view features. Moreover, within the extracted feature vector, the importance of each feature element was theoretically characterized in \cite{tvt}, which allows for more fine-grained resource allocation optimization adapting to element-wise feature importance.

Existing research primarily concentrates on enhancing inference performance through ISCC design under practical constraints like latency. However, it is essential to note that in real-world ICPS, battery-powered edge devices, e.g., internet-of-things (IoT) sensors or actuators, also encounter the challenging bottleneck of limited energy supply. In addition, in recent years, AI models have recently experienced exponential growth in size, significantly increasing the demand for data acquisition, model computation, and information exchange \cite{green}. This has exacerbated the dilemma of limited energy for edge devices, as they must manage not only the computational complexity of larger models but also  growing  requirements of communication and sensing. Hence, minimizing energy consumption at resource-limited edge devices under QoS requirements of inference tasks, e.g., accuracy and latency constraints, is a pressing need for the success of edge inference in practical ICPS.

As pointed out in \cite{green}, energy consumption in edge AI inference tasks is primarily determined by three key components: sensing, communication, and computation energy. Unfortunately, existing research on energy-efficient edge inference design often considered these components partially and separately, resulting in suboptimal performance. For instance, the authors in \cite{eeiscc} focused on resource allocation of sensing and communication for saving energy, while \cite{pimrc} addressed the joint design of communication and computation. This partial or separated design strategy overlooks the coupling relationship between sensing, communication, and computing. While it may achieve optimality at a single layer, it results in energy inefficiency from a holistic perspective. Therefore, joint design for sensing, communication, and computation is the key to improving energy efficiency, which is challenged by the following difficulties. Firstly, the performance of sensing, communication, and computation jointly determines inference accuracy, but the nonlinear and implicit coupling between these factors remains unexplored. In addition, the underlying mechanisms by which sensing, communication, and computation individually affect inference accuracy remain unclear. The lack of explicit characterization of these dimensions hinders the joint design and efficient resource allocation in ISCC systems, often resulting in reliance on empirical or heuristic approaches.
To the best of our knowledge, little effort has been made to elucidate the intrinsic mechanisms by which the three functional components affect inference accuracy, nor has a joint design approach been proposed to address the energy concerns for edge inference in ISCC networks.


Against this background, in this paper, we propose an energy-efficient ISCC framework as well as a corresponding resource allocation method for common classification tasks. The main contributions of our work are listed as follows.
\begin{itemize}
    \item \textbf{Energy-efficient ISCC framework}: 
    A flexible ISCC framework is established for energy-efficient edge inference. In concrete, we adopt the split inference method with flexible splitting points to meet the latency requirement and balance the communication and computational costs at the edge device. Depending on varying sensing qualities, we first propose to prune the device-side sub-model and employ stochastic quantization of features before transmission. Adjustable pruning ratio and quantization precision allow adaptive tuning of computation and communication capabilities to meet the requirement of inference accuracy and reduce redundant energy consumption. Considering the QoS constraints, we formulate an energy consumption minimization problem for resource allocation design within the proposed ISCC framework, which jointly optimizes the power allocation, selection of splitting point, pruning ratio, and the quantization precision.
    \item \textbf{Explicit inference accuracy characterization}: For the challenging task of characterizing inference accuracy in the formulated problem, taking classification tasks as examples, we derive a strict lower bound of the inference accuracy by analyzing the sufficient condition for a received feature sample can be correctly classified. To allow tractable optimization and insight extraction, we further derive an explicit and effective approximation of the derived bound of classification accuracy. It is revealed that sensing quality determines the achievable classification accuracy. On the other hand, the impacts from adaptive model pruning and stochastic feature quantization can be equivalently viewed as introducing additional additive noise on extracted features, thereby incurring performance loss.
    \item \textbf{Low-complexity resource allocation}: Building upon these analytical results, we propose an iterative algorithm for the ISCC resource allocation optimization. The original problem is divided into two subproblems, i.e., an accuracy-related subprobblem and a latency-related subproblem. For the first subproblem, we apply the typical golden section search method, which guarantees the global optimal solution. As for the second subproblem, we succeed to find the optimal solutions by checking the Karush-Kuhn-Tucker (KKT) conditions. Further by alternating iterations between the two subproblems, we rigorously ensure the convergence to at least a local optimum of the original problem.
    \item \textbf{Experiments}: We conduct extensive experiments to verify the performance analysis regarding classification accuracy and evaluate our proposed framework and resource allocation algorithm based on a typical human motion recognition task. Compared with existing baselines, the energy consumption obtained from the proposed method is greatly reduced, especially with stringent low-latency constraints. Moreover, it is shown that improving sensing qualities is of great importance to pursue high classification accuracy.
\end{itemize}

The rest of this paper is organized as follows. In Section~\Rmnum{2}, we offer the details of the proposed energy-efficient ISCC framework, and formulate the joint resource allocation optimization problem for energy consumption minimization. Section \Rmnum{3} characterizes the inference accuracy under the ISCC framework. In Section~\Rmnum{4}, we simplify the resource allocation problem and propose a low complexity algorithm. Simulation results and conclusions are given in Sections \Rmnum{5} and \Rmnum{6}, respectively.

Throughout the paper, numbers, vectors, and matrices are represented by lower-case, boldface lower-case, and boldface uppercase letters, respectively. The set $[N]$ is equal to $\{1,2,\cdots,N\}$. The operators $|\cdot|$ and $\left\|\cdot\right\|$ take the norm of a complex number and a vector, respectively. Moreover, the operator $|\cdot|$ returns the number of elements when the input is a set. The operator $\left\Vert \cdot\right\Vert_F$ denotes the Frobenius norm of a matrix. Let $\mathbb{R}$, $\mathbb{C}$, and $\mathbb{Z}^+$ denote the set of real numbers, complex numbers, and positive integers, respectively. 
The superscripts $(\cdot)^T$ and $(\cdot)^H$ stand for the transpose and conjugate-transpose operations, respectively. We use $\mathbb{E}[\cdot]$ to denote the expectation of a random variable (RV). 
The operator $\mathrm{Exp}(\lambda)$ is the exponential distribution with rate parameter $\lambda$.

\section{System Model}
As shown in Fig. \ref{fig:model}, we consider an edge inference system consisting of an ISCC edge device and a server. The edge device with limited resources aims at performing AI inference on real-time sensory data for classification tasks under the assistance of the server. The ISCC models and the proposed energy-efficient ISCC design are described in the following subsections.

\begin{figure}[!t]
  \centering
  \centerline{\includegraphics[width=3.2in]{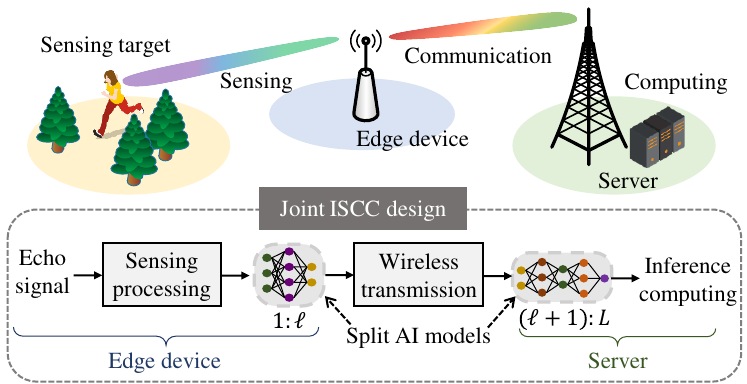}}
  \caption{Model of ISCC based edge inference system.}\label{fig:model}
\end{figure}

\subsection{Sensing Model}
\subsubsection{Sensing signal and processing}
We adopt the widely used frequency-modulated continuous wave (FMCW) signals with $M$ up-ramp chirps of duration $T_{0}$ for sensing. With the normalized FMCW signal $x(t)$ \cite{ISCC}, we write the echo signal as
\begin{align}\label{eq1}
    \!\!y(t)\!=\! \!\sqrt{P_{\text{S}}} h_0(t) x(t\!-\!\tau_0) \!+\!  \sqrt{P_{\text{S}}} \sum_{j=1}^J \! h_j(t) x(t\!-\!\tau_j)\!+\!n(t),
\end{align}
where $\sqrt{P_{\text{S}}}$ is the transmit power for sensing, $h_0(t)$ denotes direct attenuation of the target round-trip, $\tau_0$ is the round-trip delay, $h_j(t)$ and $\tau_j$, $j\in [J]$, denote the  reflection coefficient and signal delay of the $j$-th indirect path, respectively, and $n(t)$ is the Gaussian noise.

With the received echo signals, the edge device performs the following sensing processing to facilitate the subsequent AI inference. To begin with, the raw signal $y(t)$ is sampled with rate $f_s$, and then the sampled discrete sequence is reconstructed into a  two-dimensional sensing data matrix $\mathbf{Y}\in\mathbb{C}^{f_s T_0\times M}$. It is worth noting that the column dimension of $\mathbf{Y}$ is the fast-time dimension, while the row dimension is the slow-time dimension containing the feature in the Doppler spectrum shift. Then, according to \cite{peixi,ISCC}, we apply a  singular value decomposition (SVD)-based filter for $\mathbf{Y}$ to mitigate the clutter. Specifically, the filtered sensing data matrix is expressed as 
\begin{align}
    \bar{\mathbf{Y}}=\sum_{i=r_1}^{r_2} \varsigma_i \mathbf{u}_i \mathbf{v}_i^H,
\end{align}
where $\varsigma_i$, $\mathbf{u}_i$, and $\mathbf{v}_i$ represent the $i$-th singular value, the $i$-th left-singular vector, and the $i$-th right-singular vector of $\mathbf{Y}$, respectively, and $r_1\leq r_2$ are empirical parameters determined by experiments. Considering that only the slow-time dimension of $\bar{\mathbf{Y}}$ is useful for inference task, we convert it into a vector $\mathbf{\bar{y}}$ by summing up the data along the column dimension. Finally, short-time Fourier transform (STFT) is applied to generate a spectrogram $\mathbf{x}$, which serves as the input of AI model for inference task. To unify and standardize the learning and inference processes, all spectrograms $\mathbf{x}$ have been normalized to satisfy $\Vert \mathbf{x} \Vert=1$.

During the sensing procedure, the total latency and energy consumption are calculated as $T_{\text{sen}}=T_0 M$ and $E_{\text{sen}}=P_{\text{S}}T_{\text{sen}}$, respectively. Without loss of generality, we ignore the latency and energy consumption associated with the signal processing. 

\subsubsection{Sensing metrics}
According to (\ref{eq1}), the desired signal is interfered by higher-order scattering and noise. Consequently, the quality of the obtained spectrograms monotonically improves with the increase in transmit power 
$P_{\text{S}}$ until the noise becomes negligible. The experimental results in \cite{peixi} confirmed this conclusion by using the structural similarity (SSIM) index as the metric. Moreover, discriminant gain, derived from the well-established Kullback-Leibler (KL) divergence, is widely used as a substitute metric for inference accuracy \cite{lanqiao,ISCC}. Direct inspection of the discriminant gain in Eq. (17) of \cite{ISCC} reveals that it monotonically increases with sensing power, thereby establishing a positive relationship between the inference performance and sensing power.

\subsection{Split Inference Model}
Consider a pre-trained deep neural network (DNN) with $L$ layers for the AI inference. Due to limited computing resources and energy of the edge device, it is usually challenging to meet low latency requirements for the DNN calculations at the edge device. Additionally, it is often overloaded to support the direct upload of high-dimensional data sample to the server with limited wireless resources. To address these issues, we split the large-scale DNN at layer $\ell\in\mathcal{L}$, which is known as splitting point and $\mathcal{L}$ denotes its feasible set, and deploy the sub-models separately on the edge device and the server, as illustrated in Fig. \ref{fig:model}. Note that by selecting different splitting points, all possible situations are covered with this general splitting model. For example, when $\ell=L$, it represents the case that the AI inference is performed completely at the edge device.

To further reduce the computational burden on edge devices, in practice, the well-trained sub-models are usually pruned before deploying at the edge device via the existing model pruning methods \cite{imprun,weightprun,isik}. Let $\rho \in (0,1]$ denote the pruning ratio for the input sub-model at the edge, which represents the proportion of remaining nodes to the total number of nodes. Based on sensing quality and communication capability, we aim to adaptively adjust the inference ability of DNN by selecting a proper value of $\rho$, thereby achieving full utilization of the resources. We denote the number of floating point operations (FLOPs) required for computing the $l$-th layer by $\lambda(l,\rho)$. According to \cite{flop}, it follows
\begin{align}
    \lambda(l,\rho)\!=\!\left\{ 
    \begin{array}{ll}
    \left(2\gamma_{l-1}\psi_l^2\rho\!-\!1\right)\alpha_l \beta_l \gamma_l, & \text{Conv}, \\
    \alpha_l \beta_l \gamma_l \psi_l^2, & \text{MP},\\
    \left(2n_{l-1}\rho-1\right)n_l, & \text{FC},
    \end{array}\right.
\end{align}
where Conv, MP, and FC denote the convolutional, max-pooling, and fully-connected layers, respectively, $\alpha_l$, $\beta_l$, $\gamma_l$, and $\psi_l$ denote the height,
width, channels of the output feature map, and the filter size of the $l$-th layer, respectively, and $n_l$ denotes the number of neurons at the $l$-th layer. Then, the computation latency at the edge device and the server is respectively expressed as
\begin{align}
    T_{\text{comp},\text{e}}&=\frac{\sum_{l=1}^\ell\lambda(l,\rho)}{\nu_{\text{e}}},\nonumber \\
    T_{\text{comp},\text{s}}&=\frac{\sum_{l=\ell+1}^L\lambda(l,1)}{\nu_{\text{s}}},
\end{align}
where $\nu_\text{e}$ and $\nu_\text{s}$ denote the computation capacity of the edge device and server, respectively. They represent the number of FLOPs per second. According to \emph{Lemma 1} in \cite{kappa}, the energy consumption for computation at the edge device equals
\begin{align}
    E_{\text{comp}}=\kappa \sum_{l=1}^\ell \lambda(l,\rho) \nu_{\text{e}}^2, 
\end{align}
where $\kappa$ represents the effective switched capacitance depending on the chip architecture. In the considered edge inference system, the primary concern is the limited energy available at the edge device, while the central server typically has ample resources. Hence, the energy consumption at the server is not the focus of our study, and it is neglected for simplicity. Also, we assume a maximum CPU frequency for computation at the server to minimize computation latency and assist edge inference.

\begin{figure*}[!t]
  \centering
  \centerline{\includegraphics[width=6.2in]{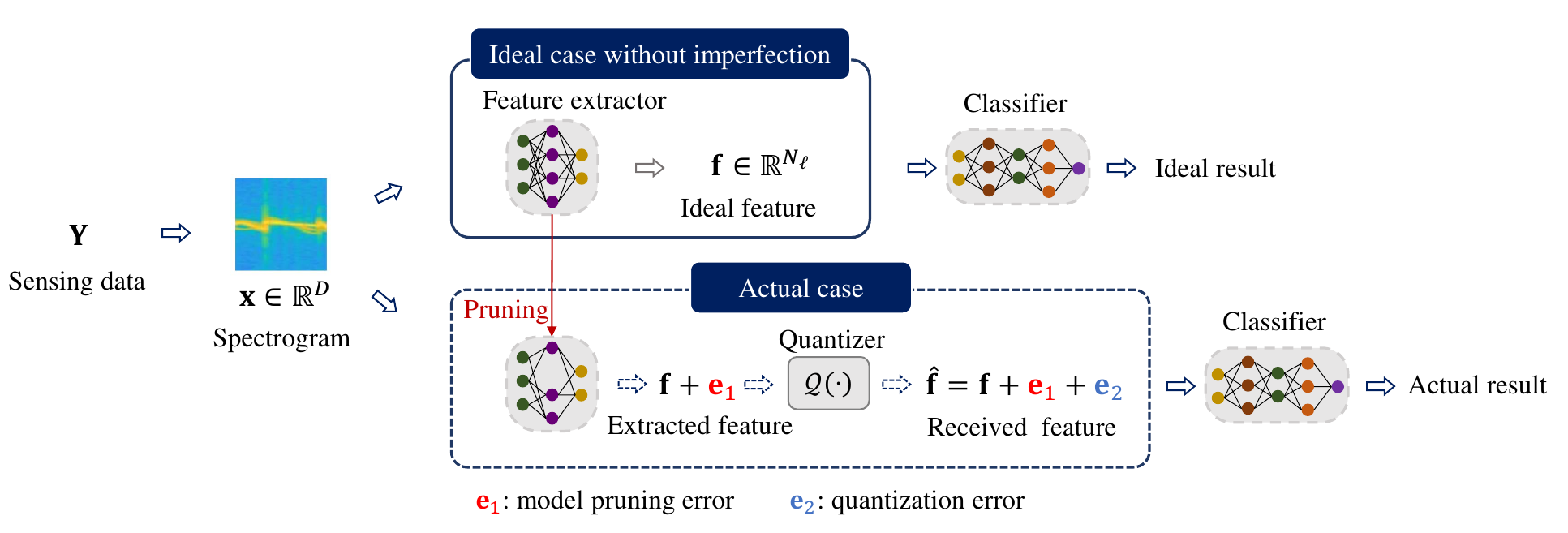}}
  \caption{Comparison between the classification results obtained from the proposed approach and the ideal case.}\label{fig:comparison}
\end{figure*}

\subsection{Communication Model}
Let $\mathbf{f}\in \mathbb{R}^{N_\ell}$ denote the output feature of the sub-model at the edge device with size $N_\ell$. In the communication procedure, the edge device uploads it to the server via a wireless link. Firstly, we adopt the typical stochastic quantization method \cite{quan,quanyao} to quantize the continuous feature elements into discrete ones for the convenience of transmission. Without loss of generality, we assume that $|f_i|\in \left[f_{\min}, f_{\max} \right]$, $\forall i\in [N_\ell]$. With $Q$ quantization bits, we divide the interval $\left[f_{\min}, f_{\max} \right]$ evenly into $2^{Q-1}-1$ quantization intervals and the $i$-th knob is denoted by $\tau_i=f_{\min}+\frac{f_{\max}-f_{\min}}{2^{Q-1}-1}i$ for $i=0,\cdots,2^{Q-1}-1$. Next, we define the quantization function as
\begin{align}\label{quan}
\mathcal{Q}(x)=\left\{ \begin{array}{cc}
\mathrm{sign}(x)\tau_i &\mathrm{w.p.} \enspace \frac{\tau_{i+1}-\vert x\vert}{\tau_{i+1}-\tau_i},\\
\mathrm{sign}(x)\tau_{i+1} &\mathrm{w.p.} \enspace \frac{\vert x\vert-\tau_i}{\tau_{i+1}-\tau_i},\\
\end{array} \right.
\end{align}
where $\vert x \vert \in [\tau_i,\tau_{i+1})$ and $\mathrm{sign}(\cdot)$ represents the signum function. Based on the quantization in (\ref{quan}), we obtain a quantized version of $\mathbf{f}$ for transmission, i.e., $\mathcal{Q}\left(\mathbf{f} \right)\triangleq \left[\mathcal{Q}\left(f_1\right),\cdots,\mathcal{Q}\left(f_{N_\ell} \right)\right]^T$. Moreover, the total number of bits needed for transmission equals $N_\ell Q$. 

After the quantization, the edge device transmits $\mathcal{Q}\left(\mathbf{f}\right)$ to the server via a wireless channel. We express the achievable rate as
\begin{align}
    r=B\log_2\left(1+\frac{g P_{\text{C}}}{B N_0} \right),
\end{align}
where $B$ is the available bandwidth, $g$ represents the channel gain between the edge device and server, $P_{\text{C}}$ is the transmit power, and $N_0$ denotes the noise power density. In order to guarantee lossless recovery of $\mathcal{Q}\left(\mathbf{f}\right)$ at the receiver, the minimum transmission latency must be
\begin{align}
    T_{\text{comm}}= \frac{N_\ell Q}{r}=\frac{N_\ell Q}{B\log_2\left(1+\frac{g P_{\text{C}}}{B N_0} \right)}.
\end{align}
Moreover, the energy consumption during the communication procedure amounts to
\begin{align}
    E_{\text{comm}}=P_{\text{C}}T_{\text{comm}}=\frac{N_\ell QP_{\text{C}}}{B\log_2\left(1+\frac{g P_{\text{C}}}{B N_0} \right)}.
\end{align}

\subsection{Problem Formulation}

With the considered edge inference task, sensing, communication, and computation are entirely coupled and the ISCC jointly determines the achievable performance. To meet the urgent demand of high-accuracy and low-latency edge inference, it is of great significance to explore an efficient resource management method at the resource-limited edge device and seek the optimal trade-offs among sensing, communication, and computation. The problem of resource allocation problem for the considered ISCC is formulated as follows:
\begin{align} \label{pro1}
\mathop{\text{minimize}}_{\ell,\rho,P_{\text{S}},P_{\text{C}},\nu_{\text{e}},Q} \enspace& E_{\text{sen}}+ E_{\text{comp}}+ E_{\text{comm}}\nonumber \\
\text{subject to}\enspace&\text{C}1:R_p\geq R_t,\nonumber \\
&\text{C}2: T_{\text{sen}}+ T_{\text{comp},\text{e}}+T_{\text{comp},\text{s}}+ T_{\text{comm}}\leq T_{\max},\nonumber \\
&\text{C}3: \ell \in \mathcal{L},\nonumber \\
&\text{C}4: \rho\in (0,1],\nonumber \\
&\text{C}5: P_{\text{S}},P_{\text{C}}\leq P_{\max},\nonumber \\
&\text{C}6: \nu_{\text{e}}\leq \nu_{\max},\nonumber \\
&\text{C}7: Q \in \mathbb{Z}^+, 
\end{align}
where $R_p$ denotes the inference accuracy, $R_t$ is a hyperparameter and denotes the target inference accuracy, $T_{\max}$ is the maximum permitted latency, and $P_{\max}$ and $\nu_{\max}$ denote the maximum transmit power and computation capacity of the edge device, respectively. 
The maximum permitted latency $T_{\text{max}}$ is defined as the target latency minus other potential delays, such as queuing delay. It serves as the upper bound for the total latency of the sensing, communication, and computation processes.
The problem in (\ref{pro1}) aims at minimizing the energy consumption required to achieve the inference accuracy and latency requirements under constraints $\text{C}1$ and $\text{C}2$,  through the joint optimization of splitting point, pruning ratio, power allocation, and the number of quantization bits. The remaining constraints, i.e., $\text{C}2-\text{C}7$, correspond to realistic constraints on these optimization variables. Moreover, in the considered single-user scenario, the entire bandwidth is allocated to the edge device for data upload. This allocation improves communication capability by increasing the available bandwidth, thereby enhancing inference performance without incurring additional energy consumption.

Through the formulated optimization problem in (\ref{pro1}), we aim to achieve the optimal trade-offs from the following two perspectives.

\emph{Latency trade-off:}
Given the assumption of fixed sensing time, we primarily consider the trade-off between communication and computation from the perspective of latency. For edge devices with limited resource, we optimize the splitting point to adaptively adjust the computational and communication burdens on the device, thus achieving the optimal latency.

\emph{Capability trade-off:} To achieve the desired inference accuracy, we strive to balance communication, sensing, and computation capabilities at the edge device, thereby minimizing the total energy consumption. Specifically, the sensing ability is adjusted through the sensing power, $P_{\text{S}}$. Computational capability, which refers to the AI model's ability to distinguish the input data samples, can be altered by adjusting the pruning rate $\rho$, with a larger $\rho$ indicating a stronger model capability. Additionally, communication capability is measured by the number of quantization bits, $Q$, with higher quantization accuracy leading to more precise data transmission.

\section{Inference Accuracy Characterization}\label{sec-III}
To pave the way for solving the problem in (\ref{pro1}), it is a prerequisite to provide an effective characterization of the inference accuracy with respect to the optimization variables in (\ref{pro1}). For the ease of exposition, we take classification tasks as an example in the subsequent analysis, by examining the essential sources of inference errors, we derive a strict lower bound on classification accuracy, as well as a closed-form approximation of the bound. The main notations functions used in this section are listed in Table \ref{notations}.

\begin{table}[t]
    \caption{Summary of Main Notations in Section \ref{sec-III}}\label{notations}
    \begin{center}
        \begin{tabular}{|m{1.4cm}<{\centering}|m{6.2cm}<{\centering}|}
            \hline
            \textbf{Notation} & \textbf{Definition}\\
            \hline
            $\mathbf{f}$ & Ideal feature vector without error\\ \hline
            $\hat{\mathbf{f}}$ & Received feature vector at the server\\ \hline
            $\mathbf{e}_1$ & Feature extraction error induced by model pruning\\ \hline
            $\mathbf{e}_2$ & Feature quantization error\\ \hline
            $\mathbf{W}_l$ & Weight matrix associated with the $l$-th layer\\ \hline
            $\hat{\mathbf{W}}_l$ & Pruned version of $\mathbf{W}_l$\\ \hline
            $N_l$ & Size of the output feature in the $l$-th layer \\ \hline
            $M_l$ & Number of parameters in $\mathbf{W}_l$\\ \hline
            $\delta(P_{\text{S}},\ell)$ & Classification margin under sensing power $P_{\text{S}}$ and splitting point $\ell$\\ \hline
            $R_0(P_{\text{S}})$ & Ideal classification accuracy with error-free feature and sensing power $P_{\text{S}}$\\ \hline
            $s$ & Lower bound for the minimum score of the obtained data samples\\
            \hline
        \end{tabular}
        \label{NOTATION}
    \end{center}
\end{table}

To begin with, we revisit the proposed approach in Fig.~\ref{fig:comparison}. Based on the split inference, we name the sub-models at the transmitter and receiver as feature extractor and classifier, respectively, according to their functions. Compared with the ideal case without model pruning and quantized transmission, the proposed approach suffers from two sources of errors, i.e., feature extraction error from the pruned model and quantization error. We respectively denote them as $\mathbf{e}_1$ and $\mathbf{e}_2$ and then denote the received feature vector at the receiver as $\hat{\mathbf{f}}\triangleq \mathbf{f}+\mathbf{e}_1+\mathbf{e}_2$, where $\mathbf{f}$ is the extracted feature in the ideal case.

Next, we explore the theoretical impact of the errors in the received features on classification accuracy from the perspective of the classifier. For effective characterization, we first introduce the concept of \emph{classification margin} \cite{cm}. As depicted in Fig. \ref{fig:cm}, the classification margin represents the infimum of the distance from an arbitrary feature vector, $\mathbf{f}$, to the classification boundary, which is related to the sensing quality and the learnt classifier. According to the definition of classification margin, a sufficient condition that the actual received feature $\hat{\mathbf{f}}$ remains to be correctly classified is
\begin{align} \label{necessary}
    \Vert \mathbf{f} -\hat{\mathbf{f}}  \Vert_2 \leq \delta(P_{\text{S}},\ell),
\end{align}
where $\delta(P_{\text{S}},\ell)$ denotes the classification margin under sensing power $P_{\text{S}}$ and splitting point $\ell$. Then, we are ready to derive an lower bound of the classification accuracy $R_p$ in the following lemma.

\begin{figure}[!t]
  \centering
  \centerline{\includegraphics[width=3in]{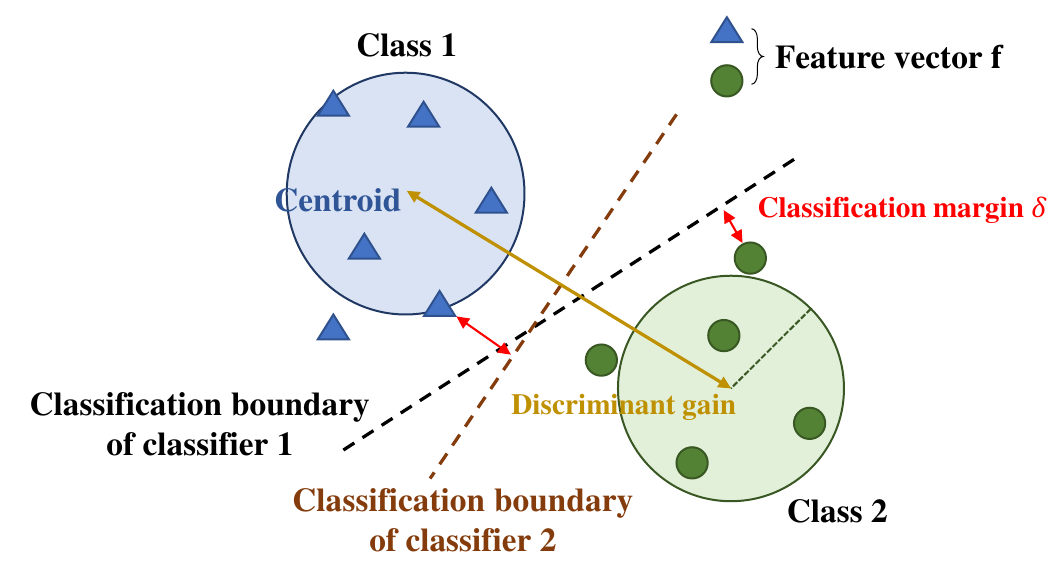}}
  \caption{Geometry of classification margin under a classification problem.}\label{fig:cm}
\end{figure}

\begin{lemma} \label{lemma1}
    The classification accuracy is lower bounded by
    \begin{align} \label{lower}
        R_p \geq R_0(P_{\text{S}}) \left(1- \frac{\mathbb{E}\left[\Vert \mathbf{e}_1\Vert^2 \right]+\mathbb{E}\left[\Vert \mathbf{e}_2\Vert^2 \right]}{\delta^2(P_{\text{S}},\ell)}\right),
    \end{align}
    where $R_0(P_{\text{S}})$ denotes the ideal classification accuracy with sensing power $P_{\text{S}}$, the expectation of $\Vert \mathbf{e}_1 \Vert^2$ is taken over random input data samples, and the expectation of $\Vert\mathbf{e}_2 \Vert^2$ is performed over the stochastic quantization error.
\end{lemma}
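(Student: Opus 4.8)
The plan is to write the classification accuracy $R_p$ as the probability, over the random sensory input and over the fresh randomness of stochastic quantization, that the received feature $\hat{\mathbf{f}}$ lands on the correct side of the decision boundary, and then to lower bound this probability by intersecting two favourable events: (i) the event $\mathcal{A}$ that the error-free feature $\mathbf{f}$ is already classified correctly, and (ii) the event $\mathcal{B}$ that the total perturbation obeys $\Vert \mathbf{e}_1+\mathbf{e}_2\Vert \le \delta(P_{\text{S}},\ell)$. By the definition of the classification margin together with the sufficient condition in (\ref{necessary}), whenever both $\mathcal{A}$ and $\mathcal{B}$ hold the perturbed feature $\hat{\mathbf{f}}=\mathbf{f}+\mathbf{e}_1+\mathbf{e}_2$ cannot cross the boundary, so $R_p\ge\Pr[\mathcal{A}\cap\mathcal{B}]$.

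Next I would factor $\Pr[\mathcal{A}\cap\mathcal{B}]=\Pr[\mathcal{A}]\,\Pr[\mathcal{B}\mid\mathcal{A}]$, identifying $\Pr[\mathcal{A}]$ with the ideal accuracy $R_0(P_{\text{S}})$. To control $\Pr[\mathcal{B}\mid\mathcal{A}]$, I would apply Markov's inequality to the non-negative random variable $\Vert\mathbf{e}_1+\mathbf{e}_2\Vert^2$, which yields $\Pr[\mathcal{B}\mid\mathcal{A}]\ge 1-\mathbb{E}\!\left[\Vert\mathbf{e}_1+\mathbf{e}_2\Vert^2\right]/\delta^2(P_{\text{S}},\ell)$, where — as is reasonable when the perturbations are not concentrated near the boundary — the second moment conditioned on $\mathcal{A}$ is taken to coincide with its marginal value. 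Multiplying the two estimates gives the product form in (\ref{lower}), once $\mathbb{E}[\Vert\mathbf{e}_1+\mathbf{e}_2\Vert^2]$ is replaced by $\mathbb{E}[\Vert\mathbf{e}_1\Vert^2]+\mathbb{E}[\Vert\mathbf{e}_2\Vert^2]$; note that when the resulting right-hand side is negative the inequality holds vacuously.

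The remaining step is to kill the cross term, i.e. to show $\mathbb{E}[\langle\mathbf{e}_1,\mathbf{e}_2\rangle]=0$. I would use that the pruning error $\mathbf{e}_1$ is a deterministic function of the input spectrogram $\mathbf{x}$, whereas the stochastic quantizer in (\ref{quan}) is unbiased coordinate-wise, $\mathbb{E}[\mathcal{Q}(x)\mid x]=x$, so that $\mathbb{E}[\mathbf{e}_2\mid\mathbf{x}]=\mathbf{0}$. Conditioning on $\mathbf{x}$ and invoking the tower property, $\mathbb{E}[\langle\mathbf{e}_1,\mathbf{e}_2\rangle]=\mathbb{E}\bigl[\langle\mathbf{e}_1,\mathbb{E}[\mathbf{e}_2\mid\mathbf{x}]\rangle\bigr]=0$, hence $\mathbb{E}[\Vert\mathbf{e}_1+\mathbf{e}_2\Vert^2]=\mathbb{E}[\Vert\mathbf{e}_1\Vert^2]+\mathbb{E}[\Vert\mathbf{e}_2\Vert^2]$, completing the derivation.

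I expect the main obstacle to be the bookkeeping around conditioning on $\mathcal{A}$: strictly, the Markov step only yields $R_0(P_{\text{S}})\bigl(1-\mathbb{E}[\Vert\mathbf{e}_1+\mathbf{e}_2\Vert^2\mid\mathcal{A}]/\delta^2(P_{\text{S}},\ell)\bigr)$, and collapsing this to the clean unconditional statement of Lemma~\ref{lemma1} rests on the mild but non-trivial assumption that ideal-correctly-classified samples do not carry systematically larger feature errors — this is where the bound trades a little rigor for analytical tractability. Secondary technical points are ensuring $\delta(P_{\text{S}},\ell)>0$ so the division is well-defined, and checking that $\mathbf{e}_1$ and $\mathbf{e}_2$ are measurable with respect to the stated randomness so the tower-property manipulation is legitimate.
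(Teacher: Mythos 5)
Your proposal follows essentially the same route as the paper's proof: factor out the ideal accuracy $R_0(P_{\text{S}})$, bound the probability that the perturbation stays within the classification margin via Markov's inequality applied to $\Vert\mathbf{e}_1+\mathbf{e}_2\Vert^2$, and drop the cross term using the unbiasedness of the stochastic quantizer. Your remarks on the hidden conditioning on the ideally-correct event and your tower-property justification of $\mathbb{E}[\langle\mathbf{e}_1,\mathbf{e}_2\rangle]=0$ are in fact slightly more careful than the paper's appeal to ``independence,'' but they do not change the argument.
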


\begin{proof}
    According to the sufficient condition in (\ref{necessary}), we have 
     \begin{align}
        R_p &\geq R_0(P_{\text{S}}) \Pr \left\{ \Vert \mathbf{f}-\hat{\mathbf{f}}\Vert\leq \delta(P_{\text{S}},\ell)\right\}\nonumber \\
        &\overset{\text{(a)}}{\geq}R_0(P_{\text{S}}) \left(1- \frac{\mathbb{E}\left[\Vert \mathbf{e}_1+\mathbf{e}_2\Vert^2 \right]}{\delta^2(P_{\text{S}},\ell)}\right)\nonumber \\
        &\overset{\text{(b)}}{=} R_0(P_{\text{S}}) \left(1- \frac{\mathbb{E}\left[\Vert \mathbf{e}_1\Vert^2 \right]+\mathbb{E}\left[\Vert \mathbf{e}_2\Vert^2 \right]}{\delta^2(P_{\text{S}},\ell)}\right),
    \end{align}
    where the inequality in (a) comes from the Markov's inequality, and (b) is due to the independence between $\mathbf{e}_1$ and $\mathbf{e}_2$ and the fact that $\mathbb{E}[\mathbf{e}_2]=0$ \cite{quan}. This completes the proof. \hfill $\square$
\end{proof}

\begin{remark}
Eq. (\ref{lower}) reveals the fundamental mechanism by which the communication, computation, and sensing processes affect the classification accuracy. Imperfections in computation and communication can be represented as additive noise, primarily impacting the accuracy of features extracted from data samples. In contrast, sensing quality determines the achievable classification accuracy and the robustness against additional errors. Therefore, the core of the joint resource allocation problem in (\ref{pro1}) is to coordinate the sensing quality, computational error $\mathbf{e}_1$, and communication quantization error $\mathbf{e}_2$ under strict delay constraints to achieve desired accuracy.
\end{remark}

To proceeding, we need to further characterize the following terms, i.e., $R_0(P_{\text{S}})$, $\delta(P_{\text{S}},\ell)$, $\mathbb{E}\left[\Vert \mathbf{e}_1\Vert^2 \right]$ and $\mathbb{E}\left[\Vert \mathbf{e}_2\Vert^2 \right]$ in \emph{Lemma \ref{lemma1}}. The details of these terms are elaborated in the sequel.

\subsubsection{$R_0(P_{\text{S}})$} 
Considering the lack of solid theory on the relationship between sensing quality and classification accuracy, existing works mostly resort to qualitative analysis \cite{ISCC,AirISCC}. In this paper, we therefore conduct qualitative analysis on $R_0(P_{\text{S}})$ and determine the functional relationship through simulation testing. 

In specific, higher sensing power allows key task-related features to be more accurately represented in the data $\mathbf{x}$, thus improving inference accuracy. Conversely, under limited sensing power, the required echo signal becomes overwhelmed by noise, making it difficult for AI models to extract effective features for inference. Therefore, there is a positive correlation between the sensing power and classification accuracy.
To further characterize this implicit relationship, we used an empirical function as an alternative. Fig. \ref{acc_Ps} depicts the relationship between the actual classification accuracy and the sensing power, $P_{\text{S}}$. It is observed that the actual classification accuracy is well approximated by $R_{0}(P_{\text{S}})=a\cdot \mathrm{arctan}\left(bP_{\text{S}}\right)$, where $a>0$ and $b>0$ are parameters to be fitted.

\begin{figure}[!t]
  \centering
  \centerline{\includegraphics[width=3in]{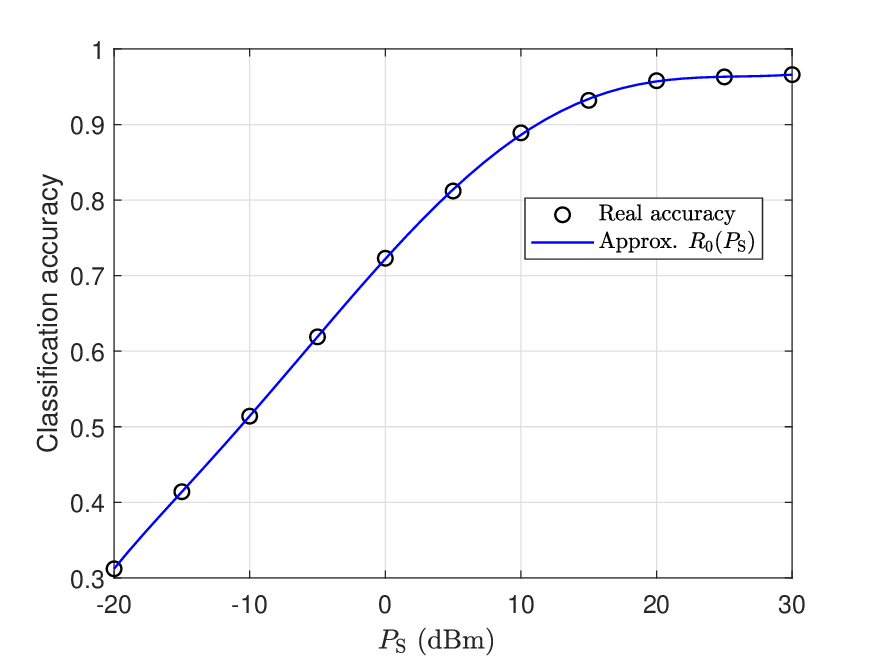}}
  \caption{Classification accuracy versus $P_{\text{S}}$.}\label{acc_Ps}
\end{figure}

\subsubsection{$\delta(P_{\text{S}},\ell)$}
We rely on the lower bound in \cite[Eq. (36)]{theorycm}, i.e.,
\begin{align}\label{deltal}
    \delta(P_{\text{S}},\ell)\geq\frac{\min_{\mathbf{x}}\{s(\mathbf{x})\}}{\prod_{i\in \mathcal{C}}\Vert \mathbf{W}_i \Vert_F}\triangleq \frac{s}{w(\ell)},
\end{align}
where $s(\mathbf{x})$ represents the score of a data sample $\mathbf{x}$, $\mathcal{C}\triangleq \{\ell+1,\cdots, L \}$, and $\mathbf{W}_i$ is the weight matrix associated with the $i$-th layer. For data sample with label $y$, $s(\mathbf{x})$ is defined~by 
\begin{align}
    s(\mathbf{x})\triangleq \min_{j\neq y} \sqrt{2}(\bm{\delta}_y -\bm{\delta}_j)^T f(\mathbf{x}),
\end{align}
where $f:\mathbb{R}^M\to \mathbb{R}^N$ denotes the trained neural network and $\bm{\delta}_j$ is the Kronecker delta vector with $[\bm{\delta}_j]_j=1$. The numerator in (\ref{deltal}) represents the minimum score of the obtained data samples, which is lower bounded by a constant, denoted by $s$. Although the minimum score of the obtained data samples is closely related to the value of $P_{\text{S}}$, for simplicity, we have adopted a universal lower bound as a substitute.

\subsubsection{Expected squared norm of $\mathbf{e}_1$}
Given the unknown distribution characteristics, we cannot directly calculate $\mathbb{E}\left[\Vert \mathbf{e}_1\Vert^2\right]$. Hence, we resort to find its upper bound in the following lemma as an alternative.
\begin{lemma}\label{lemma2}
    The square norm of error vector $\mathbf{e}_1$ is upper bounded by
    \begin{align}\label{upper}
    \Vert \mathbf{e}_1\Vert^2 \leq \sum_{l=1}^\ell \left(\! \left\Vert \mathbf{W}_l\!-\!\hat{\mathbf{W}}_l \right \Vert_F^2 \prod_{l^\prime =1,l^\prime\neq l}^\ell
    \left\Vert \mathbf{W}_{l^\prime}\right \Vert_F^2 \!\right),
    \end{align}
    where $\hat{\mathbf{W}}_l$ denotes the pruned version of $\mathbf{W}_l$.
\end{lemma}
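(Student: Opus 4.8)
\textbf{Proof proposal for Lemma \ref{lemma2}.}

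The plan is to propagate the weight-perturbation error layer by layer through the device-side sub-network and control each contribution using the sub-multiplicativity of the Frobenius norm. The key observation is that the error vector $\mathbf{e}_1 = \hat{\mathbf{f}} - \mathbf{f}$ (restricted to the pruning-induced part) is the difference between the output of the pruned feature extractor and that of the unpruned one, evaluated on the same input $\mathbf{x}$. First I would set up a telescoping decomposition: writing $\mathbf{z}_l$ and $\hat{\mathbf{z}}_l$ for the pre-activation (or post-layer) signals at layer $l$ in the ideal and pruned networks respectively, I would express $\mathbf{z}_\ell - \hat{\mathbf{z}}_\ell$ as a sum over $l=1,\dots,\ell$ of terms in which layers $1,\dots,l-1$ use the perturbed weights $\hat{\mathbf{W}}_{l'}$, layer $l$ contributes the single-layer perturbation $(\mathbf{W}_l - \hat{\mathbf{W}}_l)$ applied to the common signal, and layers $l+1,\dots,\ell$ use the ideal weights $\mathbf{W}_{l'}$. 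This is the standard "hybrid argument'' that isolates one perturbed layer at a time.

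Next I would bound the norm of each hybrid term. Here I would invoke two facts: (i) the activation functions (ReLU, max-pooling) are $1$-Lipschitz and fix the origin, so applying a layer's nonlinearity cannot increase the norm of a difference; and (ii) a linear map by a matrix $\mathbf{A}$ satisfies $\|\mathbf{A}\mathbf{v}\| \le \|\mathbf{A}\|_F \|\mathbf{v}\|$. Applying these alternately along the chain, the $l$-th hybrid term is bounded by $\big(\prod_{l'=l+1}^{\ell}\|\mathbf{W}_{l'}\|_F\big)\,\|\mathbf{W}_l-\hat{\mathbf{W}}_l\|_F\,\big(\prod_{l'=1}^{l-1}\|\hat{\mathbf{W}}_{l'}\|_F\big)\,\|\mathbf{x}\|$. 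Using $\|\mathbf{x}\|=1$ (the normalization stated in the sensing model) and the fact that pruning only zeroes out entries so that $\|\hat{\mathbf{W}}_{l'}\|_F \le \|\mathbf{W}_{l'}\|_F$, each hybrid term is at most $\|\mathbf{W}_l-\hat{\mathbf{W}}_l\|_F\prod_{l'=1,l'\neq l}^{\ell}\|\mathbf{W}_{l'}\|_F$. Squaring, and then using that for nonnegative reals $\big(\sum_l a_l\big)^2$ — wait, I actually want a bound on $\|\mathbf{e}_1\|^2$ that is the sum of squared terms, so I would instead observe that the claimed right-hand side of (\ref{upper}) dominates $\big(\sum_l a_l\big)^2$ only if cross terms are handled; the cleaner route is to bound $\|\mathbf{e}_1\|$ by $\sum_l a_l$ and then note $\big(\sum_l a_l\big)^2 \le$ (something) — but the stated inequality is $\sum_l a_l^2$, which is \emph{smaller}, so the hybrid bound must actually be tightened. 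The correct tightening is to recognize that the perturbations at different layers, after the intermediate Lipschitz/linear steps, can be combined so that the cross terms vanish or are absorbed; alternatively, one applies the decomposition with all-ideal weights after layer $l$ \emph{and} all-ideal weights before layer $l$ (not perturbed), which is valid because the first-order expansion of the product $\prod_l(\mathbf{W}_l + (\hat{\mathbf{W}}_l-\mathbf{W}_l))$ gives exactly $\sum_l$ single-perturbation terms plus higher-order terms that I would argue are non-positive or negligible under the pruning model.

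The main obstacle I anticipate is precisely this last point: getting from the hybrid/telescoping bound, which naturally produces $\big(\sum_l a_l\big)^2$, down to the sum-of-squares form $\sum_l a_l^2$ asserted in (\ref{upper}). Resolving it requires either an orthogonality-type argument among the per-layer error contributions (plausible if the pruning masks at different layers act on disjoint coordinate blocks, or if one treats $\mathbf{e}_1$ in expectation with independent layer-wise prunings so cross terms have zero mean), or a reinterpretation of the bound as applying to each \emph{layer's} propagated error separately before a final aggregation step. I would therefore carry out the hybrid argument carefully, keep track of whether the paper's pruning model (which zeroes a fraction $1-\rho$ of nodes per layer) makes the layer-wise contributions combine additively in squared norm, and flag that the intended reading of (\ref{upper}) is the per-layer decomposition $\|\mathbf{e}_1\|^2 = \sum_l \|\text{(layer-}l\text{ contribution)}\|^2$ with each contribution bounded by the product expression — which is what the sub-multiplicative Frobenius-norm chaining delivers once the $1$-Lipschitz activations and $\|\mathbf{x}\|=1$ are used.
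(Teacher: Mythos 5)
Your proposal follows essentially the same route as the paper's own proof in Appendix A: a telescoping (hybrid) decomposition that perturbs one layer at a time, combined with $\|\mathbf{A}\mathbf{v}\|\le\|\mathbf{A}\|_F\|\mathbf{v}\|$, the $1$-Lipschitzness of ReLU, $\|\hat{\mathbf{W}}_{l'}\|_F\le\|\mathbf{W}_{l'}\|_F$, and $\|\mathbf{x}\|=1$. The obstacle you flag is the right one to flag, and you should know that the paper does not actually resolve it either. Writing $a=(\mathbf{W}_\ell-\hat{\mathbf{W}}_\ell)\sigma(\cdots)$ and $b=\hat{\mathbf{W}}_\ell\bigl(\sigma(\cdots)-\sigma(\cdots)\bigr)$, the paper's step (a) in (\ref{apa0}) asserts
\begin{align*}
\|a+b\|^2\;\le\;\|\mathbf{W}_\ell-\hat{\mathbf{W}}_\ell\|_F^2\,\|\sigma(\cdots)\|^2+\|\hat{\mathbf{W}}_\ell\|_F^2\,\|\sigma(\cdots)-\sigma(\cdots)\|^2,
\end{align*}
citing the triangle and Cauchy--Schwarz inequalities, but those only give $\|a+b\|^2\le(\|a\|+\|b\|)^2$, which carries the cross term $2\|a\|\|b\|$; the cross-term-free form requires either orthogonality of $a$ and $b$ (not established) or the relaxation $\|a+b\|^2\le 2\|a\|^2+2\|b\|^2$, whose factor of $2$ compounds to $2^{\ell-1}$ over the recursion. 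Unrolled honestly, the argument yields $\|\mathbf{e}_1\|\le\sum_l a_l$ and hence $\|\mathbf{e}_1\|^2\le\bigl(\sum_l a_l\bigr)^2$, exactly as you derived, and the stated sum-of-squares bound $\sum_l a_l^2$ does not follow without an additional orthogonality or independence argument of the kind you sketch (which the paper does not supply). So your diagnosis is correct and your proposed fix-ups (orthogonality of per-layer contributions, or accepting the squared-sum/constant-inflated bound) are the legitimate ways to close the lemma; the version printed in the paper is, as stated, only valid up to such a correction.
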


\begin{proof}
    Please refer to Appendix \ref{app1}.\hfill $\square$
\end{proof}

By exploiting \emph{Lemma \ref{lemma2}}, we can straightforwardly derive a strict upper bound on $\mathbb{E}\left[\Vert \mathbf{e}_1\Vert^2 \right]$ under a given pruning strategy.
Note that this bound is a constant given the weights in the DNN, and it corresponding to the magnitude $\Vert \mathbf{e}_1\Vert^2$ in the worst-case scenario. However, the functional relationship between the pruning ratio $\rho$ and $\Vert \mathbf{e}_1\Vert^2$ is still perplexing and lacks a direct and concise form. This makes subsequent analysis and optimization challenging. To tackle this, we adopt a typical model pruning method as a representative, i.e., weight magnitude based pruning. In concrete, all parameters are sorted by their absolute value, and the parameters with the smallest $1-\rho$ fraction are set to 0. Given the specific pruning strategy, we derive a proper approximation in the following proposition.

\begin{proposition}\label{prop1}
    Under the weight magnitude based pruning, the upper bound in (\ref{upper}) can be approximated by a function of the pruning ratio, $\rho$, i.e.,
    \begin{align}\label{approx}
         \Vert \mathbf{e}_1\Vert^2 &\lesssim C(\ell) \underbrace{\left(2-\rho-\rho(\log\rho -1)^2\right)}_{\triangleq u(\rho)},
    \end{align}
    where $C(\ell)$ is defined by
    \begin{align}
        C(\ell)\triangleq \sum_{l=1}^\ell \left(\!\frac{M_l}{\lambda_l ^2}\prod_{l^\prime \neq l}^\ell
    \left\Vert \mathbf{W}_{l^\prime}\right \Vert_F^2 \!\right),
    \end{align}
    $M_l$ is the number of parameters in $\mathbf{W}_l$, and $\lambda_l$ is a constant introduced in Appendix \ref{app2}.
\end{proposition}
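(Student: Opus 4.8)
The plan is to start from the worst-case bound in Lemma~\ref{lemma2}, namely
\[
\Vert \mathbf{e}_1\Vert^2 \leq \sum_{l=1}^\ell \left( \left\Vert \mathbf{W}_l-\hat{\mathbf{W}}_l \right\Vert_F^2 \prod_{l'=1,l'\neq l}^\ell \left\Vert \mathbf{W}_{l'}\right\Vert_F^2 \right),
\]
and to produce a tractable estimate of the per-layer pruning distortion $\Vert \mathbf{W}_l-\hat{\mathbf{W}}_l\Vert_F^2$ as a function of $\rho$. The key modelling step is to treat the magnitudes of the $M_l$ entries of $\mathbf{W}_l$ as samples from a fixed continuous distribution; the weight-magnitude pruning rule zeros out exactly the smallest $(1-\rho)$-fraction of entries, so $\Vert \mathbf{W}_l-\hat{\mathbf{W}}_l\Vert_F^2 = M_l\int_0^{q_{1-\rho}} t^2\, p(t)\, dt$, where $q_{1-\rho}$ is the $(1-\rho)$-quantile of the magnitude distribution. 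I would then specialize to an exponential model for the weight magnitudes, $|w|\sim \mathrm{Exp}(\lambda_l)$ (this is where the constant $\lambda_l$ promised in Appendix~\ref{app2} enters), for which the quantile is $q_{1-\rho} = -\lambda_l^{-1}\log\rho$ and the truncated second moment integrates in closed form.

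The second step is the explicit integration: with density $\lambda_l e^{-\lambda_l t}$ on $[0,\infty)$, one has
\[
\int_0^{a} t^2 \lambda_l e^{-\lambda_l t}\, dt = \frac{2}{\lambda_l^2} - e^{-\lambda_l a}\left(a^2 + \frac{2a}{\lambda_l} + \frac{2}{\lambda_l^2}\right).
\]
Substituting $a = q_{1-\rho} = -\lambda_l^{-1}\log\rho$ gives $e^{-\lambda_l a}=\rho$ and, after collecting the $1/\lambda_l^2$ factor, exactly $\Vert \mathbf{W}_l-\hat{\mathbf{W}}_l\Vert_F^2 \approx \frac{M_l}{\lambda_l^2}\bigl(2-\rho-\rho(\log\rho-1)^2\bigr) = \frac{M_l}{\lambda_l^2}\,u(\rho)$; note $(\log\rho)^2 - 2\log\rho + 2 = (\log\rho-1)^2+1$, so the bracket reduces to $2-\rho-2\rho-\rho((\log\rho)^2-2\log\rho) = 2-\rho-\rho(\log\rho-1)^2$ as claimed. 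The remaining unpruned factors $\prod_{l'\neq l}^\ell \Vert\mathbf{W}_{l'}\Vert_F^2$ are left untouched, and factoring the common $u(\rho)$ out of the sum over $l$ yields $\Vert\mathbf{e}_1\Vert^2 \lesssim C(\ell)\,u(\rho)$ with $C(\ell)$ as defined. Finally I would record the monotonicity/positivity facts about $u(\rho)$ on $(0,1]$ — in particular $u(1)=0$ and $u$ decreasing — since these are what make the bound meaningful (larger $\rho$ means less pruning distortion) and will be needed downstream in Section~\Rmnum{4}.

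The main obstacle is justifying the exponential (or otherwise parametric) model for the empirical distribution of weight magnitudes: this is the step that turns the exact-but-opaque identity $\Vert\mathbf{W}_l-\hat{\mathbf{W}}_l\Vert_F^2 = M_l\int_0^{q_{1-\rho}}t^2 p(t)\,dt$ into the clean closed form, and it is necessarily heuristic rather than rigorous — hence the $\lesssim$ rather than $\leq$ in the statement. I would support it by appealing to empirical observations in the pruning literature that trained-network weights are heavily concentrated near zero with exponential-type tails, and by noting that $\lambda_l$ can be fit per layer (e.g. via the sample mean magnitude), so the approximation is calibrated to the actual network. A secondary, more minor point is that when the weight distribution is two-sided the magnitude is still a nonnegative RV and the same truncated-second-moment computation applies verbatim, so no additional care is needed there. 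Everything else — the integration, the quantile identity, and the algebraic simplification into $u(\rho)$ — is routine once the model is fixed.
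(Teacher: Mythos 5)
Your proposal is correct and reaches exactly the paper's expression $\frac{M_l}{\lambda_l^2}u(\rho)$ for the per-layer distortion, but by a genuinely different route. The paper works at the level of order statistics: it writes $\Vert\mathbf{W}_l-\hat{\mathbf{W}}_l\Vert_F^2$ as the sum of the $\lfloor(1-\rho)M_l\rfloor$ smallest squared magnitudes, invokes the R\'enyi representation $x_{(k)}=\frac{1}{\lambda_l}\sum_{j=1}^k \frac{z_j}{M_l-j+1}$ for exponential order statistics to compute $\mathbb{E}[x_{(k)}^2]$ exactly, and then recovers $u(\rho)$ through a discrete-derivative argument (telescoping the $\sum 1/(M_l-j+1)^2$ term, approximating the harmonic sum by a logarithm, letting $M_l\to\infty$, and integrating back using $f(1)=0$). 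You instead pass directly to the population level, replacing the empirical threshold by the $(1-\rho)$-quantile $q_{1-\rho}=-\lambda_l^{-1}\log\rho$ and evaluating the truncated second moment $M_l\int_0^{q_{1-\rho}}t^2\lambda_l e^{-\lambda_l t}\,dt$ in closed form, which yields $\frac{M_l}{\lambda_l^2}u(\rho)$ exactly under the exponential model. Both arguments rest on the same two heuristic pillars — the Laplacian fit of pretrained weights (hence exponential magnitudes with rate $\lambda_l$) and the large-$M_l$ regime — but they place the approximation in different spots: you approximate once, when identifying the empirical $(1-\rho)$-quantile with the population quantile (a law-of-large-numbers step), after which the integral is exact; the paper keeps the finite-sample order-statistics structure explicit and instead accumulates several asymptotic approximations along the way. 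Your derivation is shorter and makes the origin of the $(\log\rho-1)^2$ term transparent; the paper's buys an explicit handle on the finite-$M_l$ expectation should one want to quantify the error. One cosmetic note: your intermediate line ``$2-\rho-2\rho-\rho((\log\rho)^2-2\log\rho)$'' has a stray term (it should read $2-2\rho-\rho((\log\rho)^2-2\log\rho)+\rho$ before collecting), but the final simplification to $2-\rho-\rho(\log\rho-1)^2$ is right, as are your observations that $u(1)=0$ and $u'(\rho)=-(\log\rho)^2\le 0$.
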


\begin{proof}
    Please refer to Appendix \ref{app2}.\hfill $\square$
\end{proof}

\subsubsection{Expected squared norm of $\mathbf{e}_2$}
According to \cite{quan}, with the stochastic quantization method, the expected squared norm of $\mathbf{e}_2$ is bounded by 
\begin{align}\label{quan2}
    \mathbb{E} \left [\Vert \mathbf{e}_2 \Vert^2\right]\leq \frac{\Delta(\ell)}{\left(2^{Q-1}-1\right)^2}\triangleq \Delta(\ell)v(Q),
\end{align}
where $\Delta(\ell)$ is 
\begin{align}
\Delta(\ell) \!\triangleq\! \left\{
\begin{array}{ll}
     \frac{1}{4}N_{\ell+1} (f_{\max}\!-\!f_{\min})^2&  \text{if } (\ell+1)\text{-th layer is MP},  \\
     \frac{1}{4}N_{\ell} (f_{\max}\!-\!f_{\min})^2& \text{otherwise}.
\end{array}\right.
\end{align}
It is important to note that if the layer following the splitting point is a maxpooling layer, the effective feature dimension impacting classification performance reduces. Consequently, $\Delta(\ell)$ depends on this actual feature dimension. In addition, due to the normalization operation in DNNs, the output features of each layer exhibit similar statistical characteristics. Therefore, it is reasonable to assume that $(f_{\max}-f_{\min})^2$ remains consistent across different layers, allowing us to establish a universal upper bound through, e.g., simulation tests.



Now, based on the obtained expressions in (\ref{lower}), (\ref{deltal}), (\ref{upper}), and (\ref{quan2}), we are ready to establish a strict lower bound, as well as an explicit and more tractable approximation, for the classification accuracy to optimize.

\section{Resource Allocation Optimization for ISCC}

\begin{figure}[!t]
  \centering
  \centerline{\includegraphics[width=2.8in]{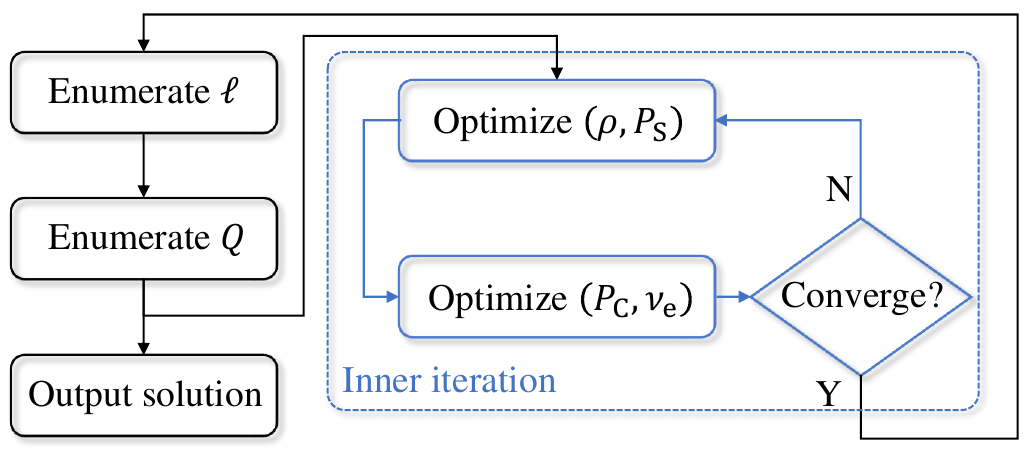}}
  \caption{Logical diagram of the proposed algorithm for solving problem (\ref{pro2}).}\label{fig:log}
\end{figure}

In this section, we propose an iterative algorithm with low complexity to solve the energy minimization problem, which is highly nonconvex. To begin with, building upon the derived expressions in Section \ref{sec-III}, we explicitly rewrite the classification accuracy constraint as 
\begin{align}\label{approxacc}
    \overline{\text{C}1}\!:R_0(P_{\text{S}})\!\left(1\!-\! \frac{w(\ell)}{s}\left(
    C(\ell)  u(\rho)\!+\! \Delta(\ell) v(Q)\right)\right)\!\geq\! R_t.
\end{align}
Then, the original problem in (\ref{pro1}) is reformulated as
\begin{align}\label{pro2}
\mathop{\text{minimize}}_{\ell,\rho,P_{\text{S}},P_{\text{C}},\nu_{\text{e}},Q} \enspace& E_{\text{sen}}+ E_{\text{comp}}+ E_{\text{comm}}\nonumber \\
\text{subject to}\enspace&\overline{\text{C}1},\text{C}2-\text{C}7.
\end{align}
Considering the challenge of tackling with the discrete constraints of splitting point, $\ell$, and the number of quantization bits, $Q$, we separate them from the optimization of other variables. For $\ell$, it represents the feasible split points in the DNN model. The depth of DNN models is typically limited, and not every layer is a feasible cut layer, so the optimization of $\ell$ only requires traversing a finite set $\mathcal{L}$. Similarly, for $Q$, we can set an upper limit $Q_{\max}$. When $Q\geq Q_{\max}$, it is safe to assume asymptotically error-free without quantization. Consequently, we only need to perform a finite search over the set $\{1,2,\cdots,Q_{\max}\}$ to determine the optimal $Q$. Therefore, a simple exhaustive search strategy suffices as an optimization approach for discrete variables $\ell$ and $Q$, which does not introduce a significant increase in complexity\footnote{For larger DNN models, more advanced enumeration methods, such as the branch-and-bound technique \cite{bnb}, must be explored to reduce computational complexity.}. This process is independent of the optimization of other variables and serves as the outer enumeration loop.

For the remaining coupled continuous variables, i.e., $(\rho,P_{\text{S}},P_{\text{C}},\nu_{\text{e}})$,
we optimize them using the alternating iteration method, which forms the inner iteration process shown in Fig. \ref{fig:log}. 
Given that the variables $(\rho, P_{\text{S}})$ are coupled in the classification accuracy constraint, $\overline{\text{C1}}$, in (\ref{approxacc}),  we provide a joint optimization for $(\rho, P_{\text{S}})$ in Sec. \ref{subsec1}. Moreover, $(P_{\text{C}},\nu_{\text{e}})$ are irrelevant to the constraint of classification accuracy and hence their optimizations are naturally separated in Sec. \ref{subsec2}. These two subproblems are alternately optimized to achieve a suboptimal solution for variables $(\rho,P_{\text{S}},P_{\text{C}},\nu_{\text{e}})$. The detailed algorithmic process is illustrated in Fig. \ref{fig:log}.

\subsection{Optimization over $(\rho, P_{\text{S}})$}\label{subsec1}
Given $(\ell,Q)$ and $(P_{\text{C}},\nu_{\text{e}})$, the problem in (\ref{pro2}) reduces to  
\begin{align}\label{pro4}
\mathop{\text{minimize}}_{\rho, P_{\text{S}}}  \enspace& P_{\text{S}}T_{\text{sen}}+\kappa \nu_{\text{e}}^2 \sum_{l=1}^\ell \lambda(l,\rho)\nonumber \\
\text{subject to}\enspace&\text{C}2:
\frac{\sum_{l=1}^\ell\lambda(l,\rho)}{\nu_{\text{e}}}\leq T_{1},\nonumber \\
&\overline{\text{C}1},\text{C}4,\text{C}5,
\end{align}
where $T_1 \triangleq T_{\max}-T_{\text{sen}}- T_{\text{comp},\text{s}}- T_{\text{comm}}$. Now, we are able to get the optimal $P_{\text{S}}$ in the following lemma.
\begin{lemma}
    If problem (\ref{pro4}) is feasible, the optimal solution, denoted by $P_{\text{S}}^*$, is
    \begin{align}\label{rho}
        P_{\text{S}}^*=r(u(\rho))
    \end{align}
    where $r(\cdot)$ represents the inverse function of $\frac{s}{w(\ell)C(\ell)} \left(1- \frac{R_t}{R_0(P_{\text{S}})}\right)-\frac{\Delta(\ell)v(Q)}{C(\ell)}$ with respect to $P_{\text{S}}$.
\end{lemma}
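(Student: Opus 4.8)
The plan is to analyze the structure of problem~(\ref{pro4}) and argue that at optimality the sensing power $P_{\text{S}}$ should be driven as small as possible subject to the constraints, so that the constraint $\overline{\text{C}1}$ becomes active and determines $P_{\text{S}}$ as an explicit function of $\rho$. First I would observe that the objective $P_{\text{S}}T_{\text{sen}}+\kappa\nu_{\text{e}}^2\sum_{l=1}^\ell\lambda(l,\rho)$ is strictly increasing in $P_{\text{S}}$ (since $T_{\text{sen}}>0$ and the second term does not involve $P_{\text{S}}$), and that among the constraints only $\overline{\text{C}1}$ and $\text{C}5$ involve $P_{\text{S}}$. Since $R_0(P_{\text{S}})=a\arctan(bP_{\text{S}})$ is strictly increasing in $P_{\text{S}}$, the left-hand side of $\overline{\text{C}1}$ is strictly increasing in $P_{\text{S}}$ for fixed $\rho$ (larger $P_{\text{S}}$ both raises the prefactor $R_0$ and relaxes the bracketed term toward $1$). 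Hence $\overline{\text{C}1}$ imposes a lower bound on $P_{\text{S}}$, while $\text{C}5$ imposes the upper bound $P_{\max}$; feasibility of~(\ref{pro4}) means the lower bound does not exceed $P_{\max}$.

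Next I would make the active-constraint argument rigorous: since the objective is strictly increasing in $P_{\text{S}}$, any feasible point with $\overline{\text{C}1}$ slack can be strictly improved by decreasing $P_{\text{S}}$ until $\overline{\text{C}1}$ holds with equality, without violating $\text{C}4$ (unaffected) or $\text{C}2$ (unaffected, as $T_1$ and $\lambda(l,\rho)$ do not depend on $P_{\text{S}}$) or $\text{C}5$ (decreasing $P_{\text{S}}$ keeps it $\le P_{\max}$). Therefore at the optimum $\overline{\text{C}1}$ holds with equality. Setting the left-hand side of $\overline{\text{C}1}$ equal to $R_t$ and solving for the term containing $P_{\text{S}}$, I would rearrange
\begin{align}
R_0(P_{\text{S}})\left(1-\frac{w(\ell)}{s}\bigl(C(\ell)u(\rho)+\Delta(\ell)v(Q)\bigr)\right)=R_t
\end{align}
into the form
\begin{align}
\frac{s}{w(\ell)C(\ell)}\left(1-\frac{R_t}{R_0(P_{\text{S}})}\right)-\frac{\Delta(\ell)v(Q)}{C(\ell)}=u(\rho),
\end{align}
which isolates a function of $P_{\text{S}}$ alone on the left and $u(\rho)$ on the right.

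Finally I would note that the left-hand side, call it $\phi(P_{\text{S}})$, is strictly monotone in $P_{\text{S}}$ on the feasible range (because $1/R_0(P_{\text{S}})$ is strictly decreasing in $P_{\text{S}}$, so $\phi$ is strictly increasing), hence invertible, and define $r(\cdot)$ as its inverse; then $P_{\text{S}}^*=r(u(\rho))$ as claimed. The main obstacle I anticipate is not the monotonicity bookkeeping but handling the domain/feasibility carefully: one must check that the value $u(\rho)$ actually lies in the range of $\phi$ over $[\,\cdot\,,P_{\max}]$ — equivalently that $\overline{\text{C}1}$ can be met at all for the given $\rho$ and $Q$ — which is precisely the feasibility hypothesis of the lemma, and one must also confirm $\phi$ is well-defined (i.e., $R_0(P_{\text{S}})>0$, which holds since $P_{\text{S}}>0$ gives $\arctan(bP_{\text{S}})>0$). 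I would close by remarking that once $P_{\text{S}}^*=r(u(\rho))$ is substituted back, the remaining problem is a single-variable optimization over $\rho$, setting up the subsequent golden-section search.
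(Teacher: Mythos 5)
Your proof is correct and follows essentially the same route as the paper's: the objective is increasing in $P_{\text{S}}$, the accuracy constraint $\overline{\text{C}1}$ rearranges to the lower bound $P_{\text{S}}\geq r(u(\rho))$ by monotonicity of $R_0$, and hence the minimum feasible sensing power is optimal; you additionally make explicit the domain/invertibility checks that the paper leaves implicit. One small slip: the bracketed factor in $\overline{\text{C}1}$ does not depend on $P_{\text{S}}$ at all (only the prefactor $R_0(P_{\text{S}})$ does), so your parenthetical "relaxes the bracketed term toward $1$" is inaccurate, though the monotonicity conclusion still holds because feasibility forces that factor to be positive.
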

\begin{proof}
    Since the classification accuracy monotonically improves with growing $P_{\text{S}}$, the constraint $\overline{\text{C}1}$ can be reformulated as $P_{\text{S}}\geq r(u(\rho))$. Moreover, the objective function linearly increases with $P_{\text{S}}$. Hence, the objective is minimized with the minimum feasible sensing power. The proof completes. \hfill $\square$
\end{proof}

Substituting $P_{\text{S}}^*$ in (\ref{rho}) into problem (\ref{pro4}) and simplifying constraint $\text{C}2$, we have the problem equivalent to:
\begin{align} \label{pro7}
\mathop{\text{minimize}}_{\rho}  \enspace& T_{\text{sen}}r(u(\rho))+\kappa \nu_{\text{e}}^2 \sum_{l=1}^\ell \lambda(l,\rho)\nonumber \\
\text{subject to}\enspace&0<\rho\leq \rho_{\max},
\end{align}
where $\rho_{\max}$ is the unique solution to equation $\frac{\sum_{l=1}^\ell\lambda(l,\rho)}{\nu_{\text{e}}}= T_{1}$. Due to the implicit form of function $r(\cdot)$, the problem in (\ref{pro7}) can not be directly solved via  traditional gradient-based methods. To this end, we make the following assumption according to the sensing quality analysis in \cite{peixi}, which is further validated via the empirical analysis in Fig. \ref{acc_Ps}.

\emph{Assumption:} The derivative of $r(\cdot)$ is monotonically non-decreasing.

As $P_{\text{S}}$ increases, the quality of the generated spectrograms gradually saturates, leading to a diminishing improvement in the classification accuracy. This observation confirms the validity of this assumption. Now, we derive the following property of the objective function in problem (\ref{pro7}).

\begin{theorem}
    The objective function in problem (\ref{pro7}) is a unimodal function.
\end{theorem}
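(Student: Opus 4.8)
The plan is to show that the objective $F(\rho) \triangleq T_{\text{sen}}\, r(u(\rho)) + \kappa \nu_{\text{e}}^2 \sum_{l=1}^\ell \lambda(l,\rho)$ is unimodal on $(0,\rho_{\max}]$ by analyzing its two summands separately and exploiting the structure of $u(\rho)$ and $\lambda(l,\rho)$. First I would record the elementary facts: each $\lambda(l,\rho)$ from the FLOPs formula is affine (hence convex and nondecreasing) in $\rho$, so $G(\rho)\triangleq \kappa\nu_{\text{e}}^2\sum_{l=1}^\ell\lambda(l,\rho)$ is affine and strictly increasing in $\rho$. The work is therefore concentrated in the first term $H(\rho)\triangleq T_{\text{sen}}\, r(u(\rho))$, and the key is to understand the auxiliary function $u(\rho)=2-\rho-\rho(\log\rho-1)^2$ introduced in Proposition~\ref{prop1}.

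Next I would study $u(\rho)$ directly. Differentiating, $u'(\rho) = -1 - (\log\rho-1)^2 - 2(\log\rho-1) = -(\log\rho)^2 + 2\log\rho - 2\cdot? $; I would carry out this one-line computation carefully, substitute $t=\log\rho\le 0$ on $(0,1]$, and verify that $u'(\rho)<0$ throughout, so $u$ is strictly decreasing, with $u(1)=0$ and $u(\rho)>0$ for $\rho<1$ (consistent with its role as a bound on $\|\mathbf{e}_1\|^2$). Then, since larger sensing power is needed to compensate a larger feature-extraction error, $r(\cdot)$ is nondecreasing in its argument (this is how $r$ was characterized via $\overline{\text{C}1}$ in the preceding lemma), so $r\circ u$ is nonincreasing in $\rho$: $H$ is a decreasing function of $\rho$. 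Combining, $F = H + G$ is the sum of a nonincreasing function and a strictly increasing affine function. To upgrade "sum of decreasing and increasing" to "unimodal" I would invoke convexity of each piece: $G$ is affine hence convex, and I would argue $H$ is convex by composing the \emph{Assumption} (the derivative of $r(\cdot)$ is monotonically nondecreasing, i.e.\ $r$ is convex) with the convexity and monotonicity of $u$ — specifically, $r$ convex nondecreasing composed with $u$ convex gives $r\circ u$ convex. Hence $F$ is convex on $(0,\rho_{\max}]$, and a convex function on an interval is unimodal, which is the claim (and in fact gives the stronger statement that the golden section search will locate its global minimum).

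The main obstacle I anticipate is the composition argument for convexity of $H$: the standard rule "$r$ convex \emph{nondecreasing} $\circ$ $u$ convex $\Rightarrow$ convex" requires $r$ nondecreasing \emph{and} $u$ convex, but here $u$ is decreasing, so the clean rule to use is instead "$r$ convex \emph{nonincreasing} $\circ$ $u$ concave $\Rightarrow$ convex" — which forces me to check whether $u$ is concave rather than convex, and whether $r$ is nonincreasing or nondecreasing, reconciling signs with the fact that $r$ itself is defined only implicitly as the inverse of the accuracy-versus-$P_{\text{S}}$ relation. I would resolve this by working with $v(\rho)\triangleq r(u(\rho))$ and differentiating twice via the chain rule, $v'' = r''(u)\,(u')^2 + r'(u)\,u''$, then signing each term: $r''\ge 0$ by the Assumption, $(u')^2\ge 0$ trivially, $r'\ge 0$ from monotonicity of accuracy in sensing power, and $u''\ge 0$ from a second derivative computation on $u$ (I expect $u$ to be convex on $(0,1]$). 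If all four signs line up as $v''\ge 0$, convexity of $F$ follows immediately; if $u''$ turns out to have a sign that breaks this, the fallback is to prove unimodality more directly by showing $F'$ has at most one sign change — $F'(\rho) = T_{\text{sen}}\, r'(u(\rho))\, u'(\rho) + \kappa\nu_{\text{e}}^2\sum_l \partial_\rho\lambda(l,\rho)$, where the first term is nonpositive and nondecreasing (by the monotonicity assumptions) and the second is a positive constant, so $F'$ crosses zero at most once from negative to positive, which is exactly unimodality.
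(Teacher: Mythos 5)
Your proposal is correct and follows essentially the same route as the paper: the paper's proof simply asserts that, by differentiating the objective and invoking the \emph{Assumption}, the derivative is monotonically increasing and hence has at most one zero in the feasible domain, which is exactly the convexity argument you spell out via $u'(\rho)=-(\log\rho)^2\le 0$, $u''(\rho)=-2\log\rho/\rho\ge 0$, and $v''=r''(u)(u')^2+r'(u)\,u''\ge 0$. Your fallback argument (the first term of $F'$ is nonpositive and nondecreasing while the second is a positive constant, so $F'$ changes sign at most once, from negative to positive) is precisely the paper's one-line reasoning with the details filled in.
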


\begin{proof}
    By taking the derivative of the objective function and combining the \emph{Assumption}, it is straightforward to verify that it is monotonically increasing. Therefore, the derivative of the objective in (\ref{pro7}) has at most one zero point in the feasible domain, which completes the proof.
    \hfill $\square$
\end{proof}

Then, we adopt the golden section optimization method for solving (\ref{pro7}) due to its gradient-free characteristic. Also, for unimodal functions, it is well-known that the golden section search method always converges to the global optimal solution \cite{golden}. Hence, by applying the golden section search method, we can obtain the optimal $\rho$ to problem (\ref{pro7}). Detailed steps are summarized in Algorithm \ref{alg0}, where $h(\cdot)\triangleq T_{\text{sen}}r(u(\cdot))+\kappa \nu_{\text{e}}^2 \sum_{l=1}^\ell \lambda(l,\cdot)$.

\begin{algorithm}[!t]
\caption{Optimal solution to problem (\ref{pro4})} \label{alg0}
\renewcommand{\algorithmicrequire}{\textbf{Input:}}
\renewcommand{\algorithmicensure}{\textbf{Ouput:}}
\begin{algorithmic}[1] 
\REQUIRE $(\ell,Q,P_{\text{C}},\nu_{\text{e}})$, $T_{\text{sen}}$, $T_1$, $R_t$, $\left\{\mathbf{W}_l\right\}_{l=1}^L$, $\kappa$, $s$, $f_{\max}$, $f_{\min}$, and $R_0(\cdot)$
\ENSURE $(\rho^*,P_{\text{S}}^*)$
\STATE \textbf{Initialize} $\text{lb}=\rho_{\min}$, $\text{ub}=1$, $\varpi=\frac{-1+\sqrt{5}}{2}$ and convergence accuracy $\epsilon$.
\STATE Update $\tau_1=\text{lb}+(1-\varpi)(\text{ub}-\text{lb})$ and $\tau_2=\text{lb}+\varpi(\text{ub}-\text{lb})$.
\WHILE{$|\tau_1-\tau_2|>\epsilon$}
\IF{$h(\tau_1)<h(\tau_2)$}
\STATE Update $\text{ub}=\tau_2$.
\ELSE
\STATE Update $\text{lb}=\tau_1$.
\ENDIF
\STATE Update $\tau_1=\text{lb}+(1-\varpi)\text{ub}$ and $\tau_2=\text{ub}+(1-\varpi)\text{lb}$.
\ENDWHILE
\STATE Obtain $\rho^*=(\text{lb}+\text{ub})/2$ and update $P_{\text{S}}^*$ according to (\ref{rho}).
\RETURN $(\rho^*,P_{\text{S}}^*)$
\end{algorithmic} 
\end{algorithm}

\subsection{Optimization over $(P_{\text{C}},\nu_{\text{e}})$} \label{subsec2}
Given $(\ell,Q)$ and $(\rho,P_{\text{S}})$, we formulate the optimization with respect to $(P_{\text{C}},\nu_{\text{e}})$ as 
\begin{align}\label{pro5}
\mathop{\text{minimize}}_{P_{\text{C}},\nu_{\text{e}}} \enspace& \frac{A_1 P_{\text{C}}}{\log_2\left(1+\frac{g P_{\text{C}}}{B N_0} \right)}+\kappa A_2 \nu_{\text{e}}^2\nonumber \\
\text{subject to}\enspace& \frac{A_1}{\log_2\left(1+\frac{g P_{\text{C}}}{B N_0} \right)}+\frac{A_2}{\nu_{\text{e}}}\leq T_2,\nonumber\\
&\text{C}5,\text{C}6,
\end{align}
where $A_1\triangleq N_\ell Q/B$, $A_2\triangleq \sum_{l=1}^\ell \lambda(l,\rho)$, and $T_2\triangleq T_{\max}-T_{\text{comp},\text{s}}-T_{\text{sen}}$ are constants. Despite the nonconvexity of problem (\ref{pro5}), we are fortunately able to derive the optimal solution in closed form in the following lemma.
\begin{lemma}\label{lemma4}
    The optimal solution $(P_{\text{C}}^*,\nu_{\text{e}}^*)$ of the problem in (\ref{pro5}) satisfies
    \begin{align}\label{eq24}
        &P_{\text{C}}^* = \frac{BN_0}{g}\left(2^{\frac{1}{t^*}}-1\right),\nonumber \\
        &\nu_{\text{e}}^* = \min \left\{\nu_{\max},\left(\frac{\mu_1}{2\kappa}\right)^{1/3} \right\},
    \end{align}
    where $t^*$ is equal to
    \begin{align}\label{eq25}
        t^* = \max\left\{\frac{\mathrm{ln}2}{W\left( \frac{A_1}{e}\left(\mu_1-\frac{BN_0}{g}\right)\right)+1}, t_{\min}\right\},
    \end{align}
    $\mu_1$ is the solution to
    \begin{align}
        A_1 t^* +\frac{A_2}{\nu_e^*}=T_2,
    \end{align}
    $t_{\min}$ is a constant defined in Appendix \ref{app3}, and $W(\cdot)$ is the Lambert-$W$ function. The exact value of $\mu_1$ is obtained via a bisection search.
\end{lemma}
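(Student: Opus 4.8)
The plan is to solve problem~(\ref{pro5}) by exploiting the structure that, for \emph{fixed} communication rate (equivalently, fixed $P_{\text{C}}$), the problem in $\nu_{\text{e}}$ decouples cleanly, and vice versa. First I would introduce the substitution $t \triangleq 1/\log_2(1+gP_{\text{C}}/(BN_0))$, so that $P_{\text{C}} = \frac{BN_0}{g}(2^{1/t}-1)$; this converts the objective term $\frac{A_1 P_{\text{C}}}{\log_2(\cdot)}$ into $A_1 t \cdot \frac{BN_0}{g}(2^{1/t}-1)$ and the latency term into the affine expression $A_1 t + A_2/\nu_{\text{e}} \le T_2$. Since $P_{\text{C}}$ is increasing in $1/t$, the power cap $\text{C}5$ translates into a lower bound $t \ge t_{\min}$ with $t_{\min}$ the value corresponding to $P_{\max}$ (this is the constant deferred to Appendix~\ref{app3}). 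The reformulated problem is then a two-variable problem in $(t,\nu_{\text{e}})$ over a region cut out by one affine latency constraint plus box constraints.

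Next I would argue that at the optimum the latency constraint is active. The objective is strictly increasing in $\nu_{\text{e}}$ (through $\kappa A_2 \nu_{\text{e}}^2$) once $\nu_{\text{e}}$ exceeds the unconstrained minimizer, and strictly increasing in $t$ for the relevant range (the map $t \mapsto t(2^{1/t}-1)$ is monotone on $t>0$ — a one-line derivative check); relaxing the latency budget can only help, so any slack could be used to lower energy. Hence $A_1 t + A_2/\nu_{\text{e}} = T_2$ holds, and I would write the Lagrangian with a single multiplier $\mu_1$ for this equality (plus KKT multipliers for $\text{C}5$, i.e. $t\ge t_{\min}$, and $\text{C}6$, i.e. $\nu_{\text{e}}\le\nu_{\max}$). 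The stationarity condition in $\nu_{\text{e}}$ gives $2\kappa\nu_{\text{e}}^3 = \mu_1$ when $\text{C}6$ is inactive, yielding $\nu_{\text{e}}^* = (\mu_1/(2\kappa))^{1/3}$, and $\nu_{\text{e}}^* = \nu_{\max}$ otherwise — hence the $\min$ in~(\ref{eq24}). The stationarity condition in $t$ is
\begin{align}
\frac{A_1 BN_0}{g}\Bigl(2^{1/t} - 1 - \tfrac{\ln 2}{t}\,2^{1/t}\Bigr) + \mu_1 A_1 = 0,
\end{align}
which I would rearrange to isolate an expression of the form $z e^{z} = (\text{const})$ with $z = \frac{\ln 2}{t}$, so that $t = \frac{\ln 2}{W(\cdot)+1}$ by definition of the Lambert-$W$ function; when the resulting $t$ violates $t\ge t_{\min}$, the boundary value $t_{\min}$ is optimal, giving the $\max$ in~(\ref{eq25}).

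Finally I would close the system: $\nu_{\text{e}}^*$ depends on $\mu_1$, $t^*$ depends on $\mu_1$, and both are tied together by the active latency equation $A_1 t^* + A_2/\nu_{\text{e}}^* = T_2$; substituting the two expressions gives a single scalar equation in $\mu_1$. I would note its left-hand side is continuous and monotone in $\mu_1$ (each of $t^*$ and $1/\nu_{\text{e}}^*$ is monotone in $\mu_1$), so the root exists, is unique, and is computable by bisection — exactly as claimed. I expect the main obstacle to be the algebraic manipulation that puts the $t$-stationarity equation into clean Lambert-$W$ form, since one must carefully track the substitution $z = \frac{\ln 2}{t}$ and verify the argument of $W$ is in the domain $[-1/e,\infty)$ where the relevant branch is well-defined; the secondary subtlety is rigorously justifying that the KKT point found is the global optimum despite the nonconvexity of~(\ref{pro5}), which I would handle by showing the feasible set is (after the $t$-substitution) such that the objective restricted to the active-constraint curve is unimodal in the single remaining free variable.
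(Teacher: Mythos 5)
Your proposal follows essentially the same route as the paper's Appendix~\ref{app3}: the substitution $t=1/\log_2\left(1+gP_{\text{C}}/(BN_0)\right)$, stationarity plus complementary-slackness casework on the two box constraints (yielding the $\min$ in (\ref{eq24}) and the $\max$ in (\ref{eq25})), the Lambert-$W$ rearrangement of the $t$-stationarity condition, and a monotone bisection on $\mu_1$ to enforce the active latency constraint. Two minor remarks: the map $t\mapsto t\left(2^{1/t}-1\right)$ is in fact strictly \emph{decreasing} on $t>0$ (its derivative $2^{1/t}\left(1-\ln 2/t\right)-1$ is negative), which if anything strengthens your argument that the latency constraint is tight; and your concern about nonconvexity is resolved more directly than by a unimodality argument, since after the substitution the reformulated problem is jointly convex (the second derivative of $t\left(2^{1/t}-1\right)$ equals $2^{1/t}(\ln 2)^2/t^3>0$), which is exactly how the paper certifies that the KKT point is globally optimal.
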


\begin{proof}
    Please refer to Appendix \ref{app3}.\hfill $\square$
\end{proof}

\subsection{Complexity and Convergence Analysis}

\begin{algorithm}[!t]
\caption{Proposed iterative algorithm for solving (\ref{pro2})} \label{alg1}
\renewcommand{\algorithmicrequire}{\textbf{Input:}}
\renewcommand{\algorithmicensure}{\textbf{Ouput:}}
\begin{algorithmic}[1]  
\REQUIRE $T_{\text{sen}}$, $T_{\max}$, $R_t$, $\left\{\mathbf{W}_l\right\}_{l=1}^L$, $B$, $g$, $N_0$, $\kappa$, $\nu_{\text{s}}$, $s$, $f_{\max}$, $f_{\min}$, and $R_0(\cdot)$
\ENSURE $(\ell,\rho,P_{\text{S}},P_{\text{C}},\nu_{\text{e}},Q)$
\FOR{$\ell\in \mathcal{L}$}
\FOR{$Q=1:Q_{\max}$}
\STATE \textbf{Initialize} iteration number $i=0$ and a sufficiently large $P_{\text{C}}^{(0)}$ and $\nu_{\text{e}}^{(0)}$to avoid infeasibility.
\REPEAT
\STATE Update $\left( \rho^{(i+1)},P_{\text{S}}^{(i+1)}\right)$ by solving problem (\ref{pro4}).
\STATE Update $(P_{\text{C}}^{(i+1)}, \nu_{\text{e}}^{(i+1)})$ according to \emph{Lemma \ref{lemma4}}.
\STATE Set $i=i+1$.
\UNTIL objective value in (\ref{pro2}) converges.
\ENDFOR
\ENDFOR
\RETURN the solution $(\ell,\rho,P_{\text{S}},P_{\text{C}},\nu_{\text{e}},Q)$ with the minimum energy consumption.
\end{algorithmic} 
\end{algorithm}

In summary, we conclude the main steps of solving  problem (\ref{pro2}) in Algorithm \ref{alg1}. We first analyze its computational complexity. For the first subproblem in (\ref{pro4}), the major complexity lies in the golden section search method, those complexity equals $\mathcal{O}\left(\log_2(1/\epsilon)\right)$. Besides, for the second subproblem in (\ref{pro5}), the complexity depends on the bisection method with complexity $\mathcal{O}\left(\log_2(1/\epsilon)\right)$. Thus, the total complexity of Algorithm \ref{alg1} is $\mathcal{O}\left(|\mathcal{L}| Q_{\max} I_{\text{itr}}\log_2(1/\epsilon)\right)$, where $I_{\text{itr}}$ denotes the required number of iterations for the alternative optimization.

As for the convergence, we note that the optimal solutions to the problems in (\ref{pro4}) and (\ref{pro5}) are attainable by the proposed algorithms, which yield to nonincreasing energy consumption in each step. Moreover, the total energy consumption is always lower bounded by a finite value, e.g., zero. Hence, we conclude that the proposed algorithm for optimizing $(\rho,P_{\text{S}}, P_{\text{C}},\nu_{\text{e}})$ always converges to a local optimum.

\begin{table}[!t]   
\renewcommand\arraystretch{1.1}
\begin{center}   
\caption{Simulation Parameters}  \label{table:1} 
\begin{tabular}{|l|l|l|} 
\hline \textbf{Notation} & \textbf{Interpretation} &\textbf{Value}\\
\hline  $R_t$&Target classification accuracy & 0.85\\
\hline  $T_{\max}$  &Maximum permitted latency & 0.8 s\\
\hline $\nu_{\max}$&Maximum CPU frequency of edge device  & 8e6 FLOP/s\\
\hline $\nu_{\text{s}}$&CPU frequency of the server & 1e11$\;$FLOP/s\\
\hline $\kappa$&The effective switched capacitance  &1e-21\\
\hline $T_0$&Chirp duration & 10 $\mu$s\\
\hline $f_s$&Sampling rate & 10 MHz\\
\hline $T_{\text{sen}}$&Unit sensing time & 0.5 s\\
\hline   $P_{\max}$&  Maximum transmit power of edge device &1 W \\ 
\hline    $B$&Available bandwidth for communication &0.1 MHz\\  
\hline   $\frac{g}{BN_0}$ &SNR for communication &20 dB\\
\hline 
\end{tabular}   
\end{center}  
\end{table}

\begin{figure}[!t]
  \centering
  \centerline{\includegraphics[width=3in]{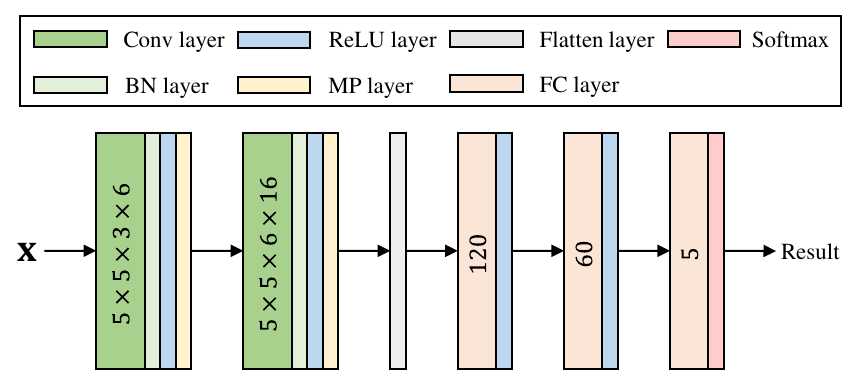}}
  \caption{Details of the adopted network.}\label{dnn}
\end{figure}

\section{Experimental Results}
In this section, numerical simulations are presented to validate the analytical results and the proposed resource allocation scheme.  We consider a typical human motion recognition task via ISCC at the edge device. Specifically, this task aims at distinguishing five human motions, i.e., standing, adult pacing, adult walking, child pacing, and child walking. We exploit the typical wireless sensing simulator in \cite{swang} to generate datasets for the DNN pre-training and data samples for edge inference. The differences in the five human motions primarily lie in height and movement speed. In particular, the heights of adults and children are uniformly distributed in the ranges [1.6m, 1.9m] and [0.9m, 1.2m], respectively. Additionally, the speeds for standing, walking, and pacing are set at 0 m/s, 0.5$H$ m/s, and 0.25$H$ m/s, respectively, where $H$ denotes the height. The heading of the moving human is uniformly distributed in the range $[-\pi,\pi]$.


The adopted AI model for inference task is a DNN with two
$5 \times 5$ convolution layers, three fully-connected layer with 120, 60, and 5 units, respectively,  and a softmax output layer. Max pooling operation is conducted following each convolutional layer and the activation function is ReLU. The details of the adopted DNN is shown in Fig. \ref{dnn}. The model is trained based on a training dataset with 12,000 data samples with no noise corruption and a testing dataset with 3,000 data samples. Unless otherwise specified, the other parameters are listed in Table \ref{table:1}.

For comparison, we consider the following baselines regarding the ISCC design and the corresponding resource allocation method.
\begin{itemize}
    \item On-server inference: The edge device upload all raw sensory data to the server and the inference task is fully accomplished at the server.
    \item Advanced on-device inference: The inference task is fully computed at the edge device and no communication procedure is involved. Also, model pruning strategy is conducted to improve energy efficiency.
    \item w/o pruning: Except for the no feature extractor pruning, everything else is consistent with the proposed method.
    \item Typical ISCC framework \cite{ISCC,AirISCC}: Principal component analysis (PCA) is adopted at the edge device for feature extraction and fixed classifier is deployed at the server.
    \item Joint Communication and Computation (JC\&C) design \cite{pimrc,vtc}: This scheme focuses solely on the joint optimization of communication and computation processes, including model splitting, model pruning, and adaptive feature quantization. The sensing process is not considered and is treated as a fixed input.
\end{itemize}
Among them, the first and second baselines serve as ablation experiments for the splitting selection in the proposed method, while the third baseline is an ablation experiment for model pruning. Additionally, by comparing with the typical ISCC framework, we aim to highlight the significance of optimization design for computational processes, such as adaptive model splitting and pruning. Meanwhile, compared to existing frameworks that focus solely on computation and communication, the importance of sensing design is emphasized.

\subsection{Classification Accuracy Approximation Performance}

\begin{figure}[!t]
  \centering
  \centerline{\includegraphics[width=3in]{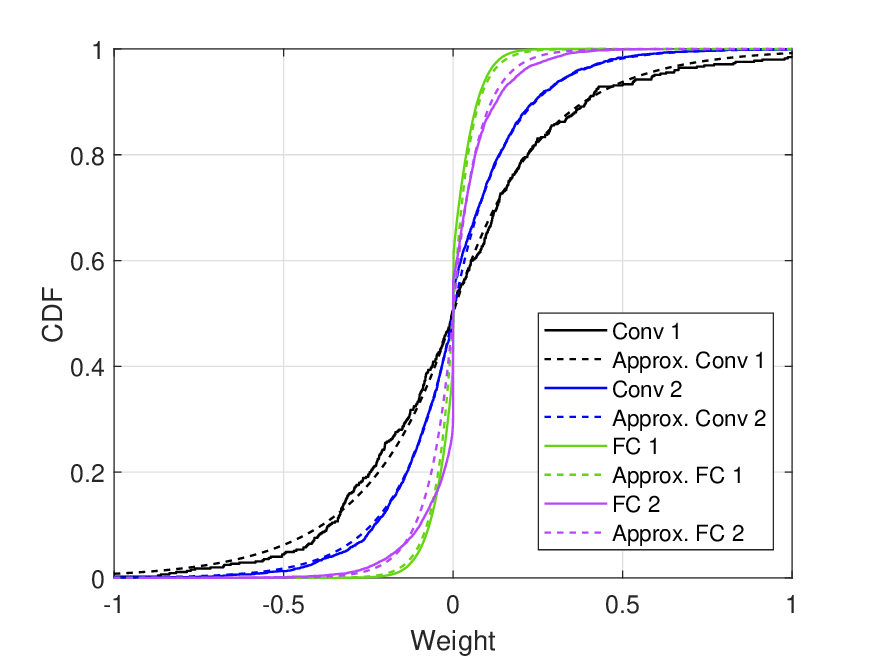}}
  \caption{Laplacian approximation for pre-trained model parameters.}\label{fig1}
\end{figure}

\begin{figure}[!t]
    \centering
    \subfigure[]{ \includegraphics[width=0.48\linewidth]{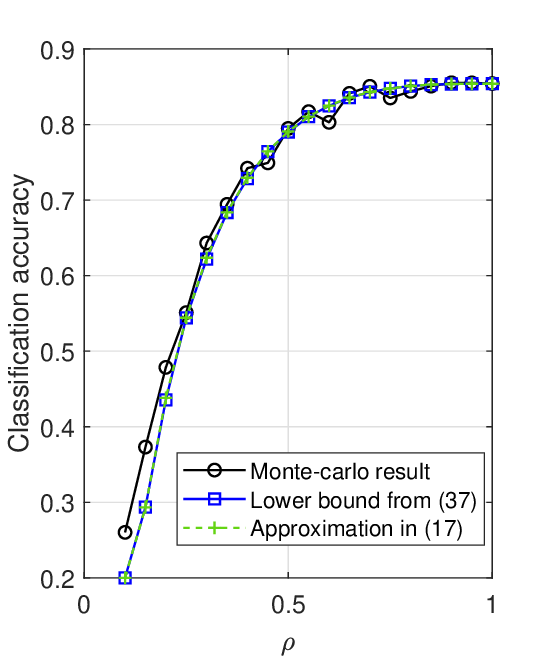}}
    \subfigure[]{	\includegraphics[width=0.48\linewidth]{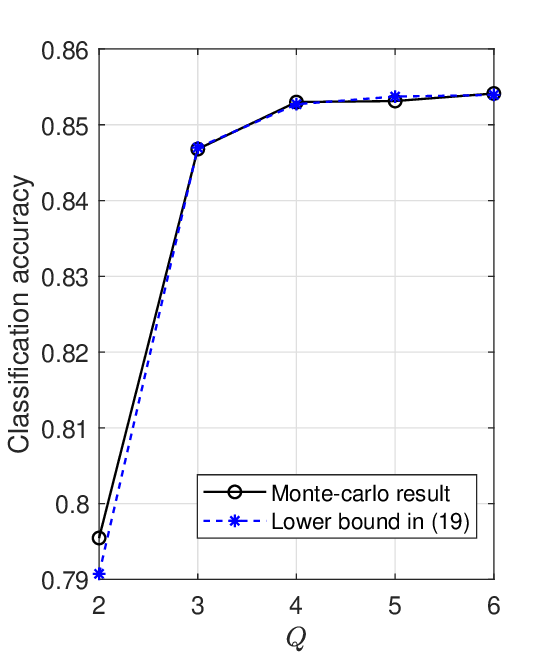}}
    \caption{\centering Classification accuracy versus (a) the pruning ratio, $\rho$, and (b) the number of quantization bits, $Q$.} \label{fig2}
\end{figure}


We validate the approximations in \emph{Proposition \ref{prop1}} in Fig.~\ref{fig1} and Fig. \ref{fig2}. Initially, we present the cumulative distribution function (CDF) of the model parameters across various layers. It is observed that the Laplace distribution employed in this study effectively approximates the parameter distribution in different layers of the pre-trained DNN, particularly in layers with a significant number of parameters, such as the second Conv layer and the first FC layer. This precise approximation of the weight distribution establishes a solid foundation for that subsequent theoretical analyses. Then, we verify the analysis regarding the pruning ratio, $\rho$. It is noteworthy that the lower bound we provide can effectively characterize the impact mechanism of various optimization variables on the classification accuracy and discern their trends. However, excessive scaling may lead to the final lower bound being much lower than the actual classification accuracy. To address this, we introduce a compensation constant for scaling the classification margin and it is determined numerically. As depicted in Fig.~\ref{fig2}(a), the derived approximation with scaling also aligns well with the actual curve obtained from Monte Carlo verification. It is evident that the parameters of the pre-trained DNN exhibit a degree of redundancy, allowing for pruning up to 50\% without significant performance loss in classification. This observation supports the pruning of DNNs to strike a balance between computational complexity and classification accuracy. Finally, we validate the analysis of the impact of quantization bits, $Q$, on classification in Fig. \ref{fig2}(b). It can be observed that the performance loss caused by quantization errors is negligible, especially compared to that by model pruning. Therefore, we are suggested to perform low-precision quantization on the extracted features to greatly reduce the communication overhead with marginal performance loss, thereby facilitating the energy-efficient ISCC design.

\subsection{Energy Consumption Performance}

\begin{figure}[!t]
  \centering
  \centerline{\includegraphics[width=3in]{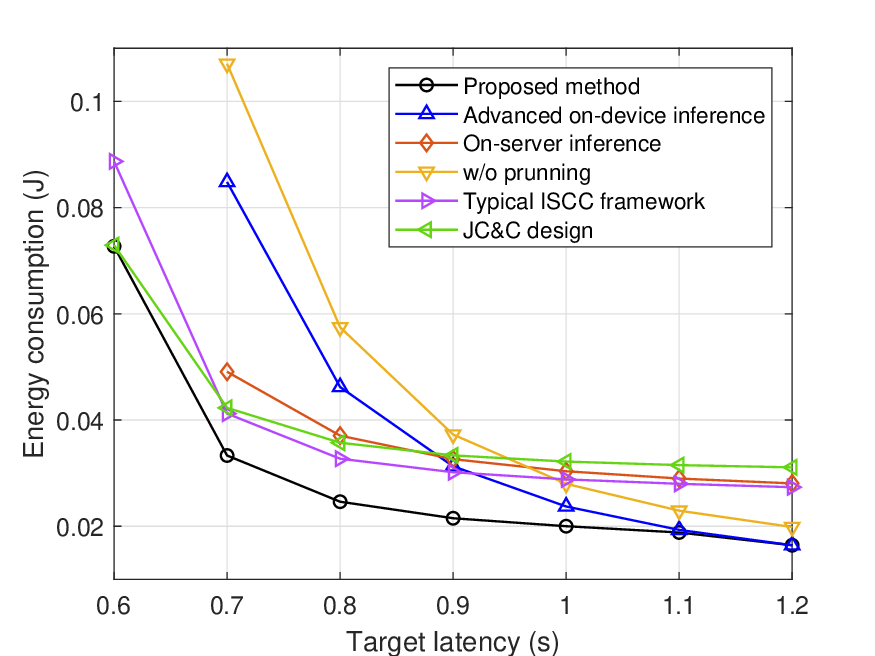}}
  \caption{Energy consumption versus the target latency, $T_{\max}$.}\label{latency}
\end{figure}

Fig. \ref{latency} depicts the total energy consumption versus the maximum tolerant latency, $T_{\max}$. It is observed that the proposed resource allocation method achieves the least energy consumption under diverse latency requirements.  When faced with extremely low latency requirements, such as 
$T_{\max}=0.6$ s, some baseline methods fail to complete inference tasks within the deadline. This is mainly due to the lack of adaptive adjustments for communication and computation overheads. In such cases, optimizing only the communication and computation processes in JC\&C scheme approaches the performance of the proposed method. This is because the proposed method does not optimize sensing time, necessitating an improvement in sensing quality to mitigate performance loss caused by incomplete communication and computation. Fixing a high sensing power is an energy-efficient choice in this case. Conversely, under more lenient latency constraint, executing the inference task locally is preferred. This is attributed to latency no longer being the primary limiting factor, thereby obviating the necessity for the server's powerful computing capabilities and avoiding the additional power consumption associated with the communication process. Notably, the JC\&C method exhibits the highest energy consumption in this scenario due to its lack of the consideration for sensing process.  The simulation results emphasize the necessity of jointly optimizing sensing, communication, and computation for energy-efficient inference. They further highlight the proposed scheme's advantages in adaptively adjusting resource allocation strategies based on task requirements

\begin{figure}[!t]
  \centering
  \centerline{\includegraphics[width=3in]{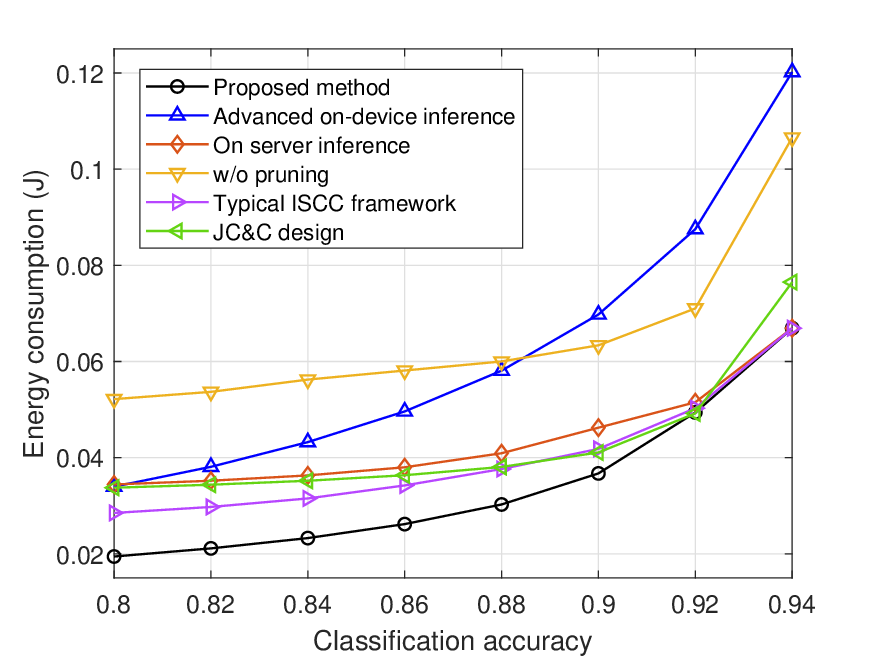}}
  \caption{Energy consumption versus the target classification accuracy, $R_t$.}\label{energy_acc}
\end{figure}

\begin{figure}[!t]
  \centering
  \centerline{\includegraphics[width=3in]{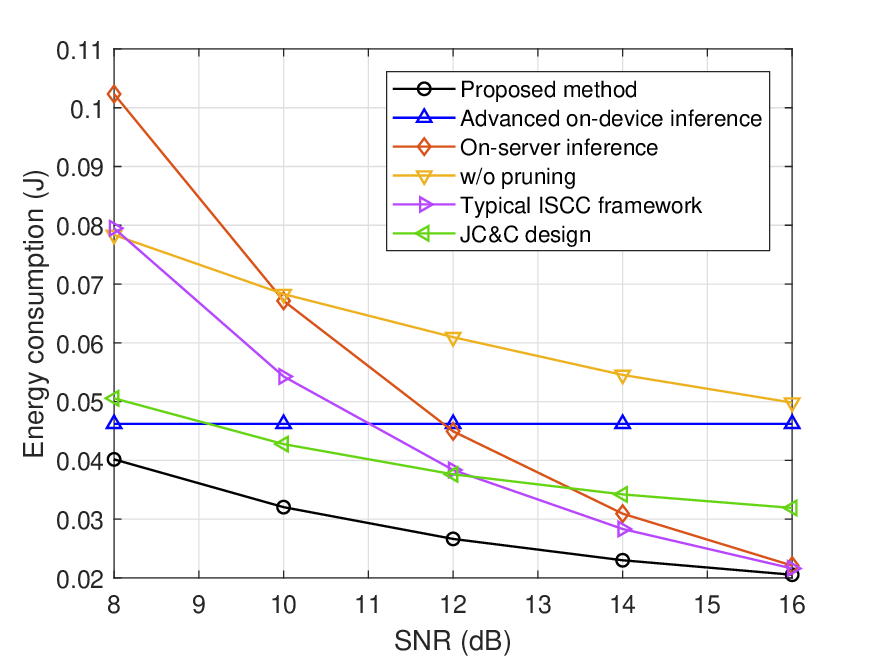}}
  \caption{Energy consumption versus SNR.}\label{snr}
\end{figure}

In Fig. \ref{energy_acc}, we show the comparison of energy consumption with target classification accuracy $R_t$. It is evident that achieving more accurate classification results in higher energy consumption. Meanwhile, our proposed algorithm also outperforms the other baselines. Besides, as the target accuracy increases, it becomes apparent that performing complete inference tasks at the edge device consumes the most energy. This is due to the fact that, compared to limited quantization errors, model pruning has a more pronounced effect on classification accuracy. To attain the desired classification accuracy, more resources must be allocated to compensate for the performance degradation caused by pruning errors. In addition, it is observed that under a high accuracy target, performing inference tasks at the server results in nearly the minimal energy consumption. This is because low-precision quantization can effectively reduce transmission latency without substantially compromising classification performance. Therefore, in scenarios with a strict latency constraint and high classification accuracy requirements, the optimal strategy typically involves allocating resources for sensing and communication processes.

Fig. \ref{snr} shows the impact of SNR on energy consumption, which further reflects the impact of communication capability. The strategy of completing all computation on the edge device does not involve communication process, so the energy consumption in this scenario remains invariant. Additionally, a higher SNR leads to a significant decrease in energy consumption for the other schemes. The substantial improvement in communication capability and efficiency due to increased SNR prompts the optimal strategy to upload more data to the server for further computation. This explains the asymptotic optimal performance of server-based inference under high SNR conditions. Additionally, it is worth noting that the typical ISCC framework in \cite{ISCC,AirISCC} can be seen as a special case of the on-server inference method, where additional dimensionality reduction is achieved through PCA. Hence, we observe that this method shows certain performance gains compared to the on-server inference approach, yet it still maintains a similar trend of change. By comparing with it, we further emphasize the necessity of optimizing for the computational process in the proposed method, especially in communication-constrained scenarios.

\begin{table*}[!t]
\renewcommand{\arraystretch}{1.2}
\caption{Comparison of Optimal parameter settings under different scenarios.}
\begin{center}
\begin{tabular}{|c|m{1.5cm}<{\centering}|m{1.5cm}<{\centering}|m{1.5cm}<{\centering}|m{1.5cm}<{\centering}|m{1.5cm}<{\centering}|m{2.3cm}<{\centering}|}
\hline
\multirow{2}{*}{\textbf{Scenarios}} &  \multicolumn{6}{c|}{\textbf{Optimal parameter settings}} \\ \cline{2-7} 
~ & {$\ell$} & {$Q$} & {$\rho$} & {$P_{\text{S}}$ (W)} & {$P_{\text{C}}$ (W)} & {$\nu_{\text{e}}$} (1e6 FLOP/s)\\ 
\hline
Latency-constrained & 5 & \textbf{5} & 0.3425 & 0.0189 & 0.0907 & \textbf{7.2871} \\
\hline
Accuracy-constrained  & 9 & 3 & \textbf{0.7409} & \textbf{0.0623} & 0.0169 & 2.6703 \\
\hline
SNR-constrained  & \textbf{11} & 2 & 0.6741 & 0.0145 & \textbf{0.1527} & 2.5698 \\
\hline
\end{tabular}
\label{trade-off}
\end{center}
\end{table*}

To further reveal the fundamental trade-offs, we show the optimal parameter settings obtained via the proposed resource allocation method under different scenarios in Table \ref{trade-off}. In particular, we consider three typical scenarios, i.e., latency-constrained $(R_t,T_{\max},\text{SNR})=(0.8,0.6,100)$, accuracy-constrained $(R_t,T_{\max},\text{SNR})=(0.94,1.2,100)$, and SNR-constrained $(R_t,T_{\max},\text{SNR})=(0.8,1.2,1)$ scenarios. To achieve low-latency requirements, extensive reduction of redundant parameters is crucial as it significantly enhances computational speed. In scenarios requiring high classification accuracy, augmenting sensing capabilities and improving the quality of data samples become paramount. This is because the quality of sensing sets the upper limit for achievable classification accuracy. In scenarios constrained by communication resources, maximizing computation at the edge device is essential to achieve data uploads minimization. Moreover, as the split point moves further along the network, the required quantization accuracy decreases. This is because, after multiple layers of computation, the extracted features become increasingly distinct and robust to quantization errors, resulting in minimal loss in inference accuracy even with lower quantization precision.

\section{Conclusion}
In this paper, we considered the energy-efficient ISCC design for edge inference. A joint resource allocation problem, including the optimization of power allocation, splitting design, pruning ratio and quantization precision, was formulated to minimize the energy consumption at the edge device with classification accuracy and latency requirements. The simulation results verified the effectiveness of the proposed ISCC design and confirmed its superiority over baseline approaches, especially with stringent latency constraint. Moreover, it is suggested that improving sensing qualities is preferred to enhance classification accuracy. Further, extending the proposed approach to multi-device collaborative inference scenarios is a promising direction for future work.

\appendices

\section{Proof of Lemma \ref{lemma2}} \label{app1}
As a preparation, we first represent the trained feature extractor as follows
\begin{align}\label{a0}
    f_e(\mathbf{x})=\mathbf{W}_\ell \sigma\left( \mathbf{W}_{\ell-1}\sigma\left( \cdots \mathbf{W}_{2}\sigma\left( \mathbf{W}_{1}\mathbf{x}\right)\right)\right)
\end{align}
where $\sigma$ denotes the ReLU activation. Moreover, we can also represent the pruned feature extractor in a similar form
\begin{align}
    \hat{f}_e(\mathbf{x})=\hat{\mathbf{W}}_\ell \sigma\left( \hat{\mathbf{W}}_{\ell-1}\sigma\left( \cdots \hat{\mathbf{W}}_{2}\sigma\left( \hat{\mathbf{W}}_{1}\mathbf{x}\right)\right)\right).
\end{align}
Now, we can reformulate the square norm of $\mathbf{e}_1$ in (\ref{apa0}) on the top of this page, where (a) is due to the triangle inequality and the Cauchy-Schwarz inequality, (b) is due to the 1-Lipschitzness of ReLU activation with respect to $\ell_2$ norm. 
\begin{figure*}
\begin{align}\label{apa0}
    \Vert \mathbf{e}_1 \Vert^2 &=\Vert f_e(\mathbf{x})-\hat{f}_e(\mathbf{x}) \Vert^2=\left \Vert \mathbf{W}_\ell \sigma\left( \mathbf{W}_{\ell-1}\sigma\left( \cdots \mathbf{W}_{2}\sigma\left( \mathbf{W}_{1}\mathbf{x}\right)\right)\right)-\hat{\mathbf{W}}_\ell \sigma\left( \hat{\mathbf{W}}_{\ell-1}\sigma\left( \cdots \hat{\mathbf{W}}_{2}\sigma\left( \hat{\mathbf{W}}_{1}\mathbf{x}\right)\right)\right)
    \right\Vert^2\nonumber \\
    & = \left \Vert \mathbf{W}_\ell \sigma\left( \mathbf{W}_{\ell-1}\sigma\left( \cdots \mathbf{W}_{2}\sigma\left( \mathbf{W}_{1}\mathbf{x}\right)\right)\right)-
    \hat{\mathbf{W}}_\ell \sigma\left( \mathbf{W}_{\ell-1}\sigma\left( \cdots \mathbf{W}_{2}\sigma\left( \mathbf{W}_{1}\mathbf{x}\right)\right)\right)\right.\nonumber \\
    &\left. \quad \quad +\hat{\mathbf{W}}_\ell \sigma\left( \mathbf{W}_{\ell-1}\sigma\left( \cdots \mathbf{W}_{2}\sigma\left( \mathbf{W}_{1}\mathbf{x}\right)\right)\right)-
    \hat{\mathbf{W}}_\ell \sigma\left( \hat{\mathbf{W}}_{\ell-1}\sigma\left( \cdots \hat{\mathbf{W}}_{2}\sigma\left( \hat{\mathbf{W}}_{1}\mathbf{x}\right)\right)\right)
    \right \Vert^2 \nonumber \\
    &\overset{\text{(a)}}{\leq} \left \Vert \mathbf{W}_\ell \!-\! \hat{\mathbf{W}}_\ell\right \Vert_F^2 \left \Vert \sigma\left( \mathbf{W}_{\ell-1}\sigma\left( \cdots \mathbf{W}_{2}\sigma\left( \mathbf{W}_{1}\mathbf{x}\right)\right)\right)\right\Vert^2 \nonumber\\
    &\quad +\Vert \hat{\mathbf{W}}_\ell \Vert_F^2 \left \Vert \sigma\left( \mathbf{W}_{\ell-1}\sigma\left( \cdots \mathbf{W}_{2}\sigma\left( \mathbf{W}_{1}\mathbf{x}\right)\right)\right)\!-\!
   \sigma\left( \hat{\mathbf{W}}_{\ell-1}\sigma\left( \cdots \hat{\mathbf{W}}_{2}\sigma\left( \hat{\mathbf{W}}_{1}\mathbf{x}\right)\right)\right)
    \right \Vert^2 \nonumber \\
    &\overset{\text{(b)}}{\leq}\left \Vert \mathbf{W}_\ell \!-\! \hat{\mathbf{W}}_\ell\right \Vert_F^2 \left \Vert \mathbf{W}_{\ell-1}\sigma\left( \cdots \mathbf{W}_{2}\sigma\left( \mathbf{W}_{1}\mathbf{x}\right)\right)\right\Vert^2\! +\!\Vert \hat{\mathbf{W}}_\ell \Vert_F^2 \left \Vert \mathbf{W}_{\ell-1}\sigma\left( \cdots \mathbf{W}_{2}\sigma\left( \mathbf{W}_{1}\mathbf{x}\right)\right)\!-\!
    \hat{\mathbf{W}}_{\ell-1}\sigma\left( \cdots \hat{\mathbf{W}}_{2}\sigma\left( \hat{\mathbf{W}}_{1}\mathbf{x}\right)\right)
    \right \Vert^2.
\end{align}
\hrule
\end{figure*}
For the first term on the right hand side (RHS) of (\ref{apa0}), by exploiting the fact that $\Vert\mathbf{Wx}\Vert \leq \Vert \mathbf{W}\Vert_F \Vert \mathbf{x} \Vert$, we have 
\begin{align}\label{apa1}
    &\left \Vert \mathbf{W}_\ell \!-\! \hat{\mathbf{W}}_\ell\right \Vert_F^2 \left \Vert \mathbf{W}_{\ell-1}\sigma\left( \cdots \mathbf{W}_{2}\sigma\left( \mathbf{W}_{1}\mathbf{x}\right)\right)\right\Vert^2\nonumber \\
    &\quad \leq \left \Vert \mathbf{W}_\ell \!-\! \hat{\mathbf{W}}_\ell\right \Vert_F^2\left( \prod_{l=1}^{\ell-1} \left \Vert \mathbf{W}_l\right \Vert_F^2 \right)\Vert \mathbf{x}\Vert^2. 
\end{align}
Then, following the same procedures in (\ref{apa0}) and (\ref{apa1}), $\Vert \mathbf{e}_1 \Vert^2$ is bounded by (\ref{apa2}) on the top of the next page, where (a) is due to the fact that $\Vert \hat{\mathbf{W}}_l\Vert^2 \leq \Vert \mathbf{W}_l\Vert^2$. The proof completes.
\begin{figure*}
\begin{align}\label{apa2}
    \Vert \mathbf{e}_1 \Vert^2 &\overset{\text{(a)}}{\leq} \left \Vert \mathbf{W}_\ell \!-\! \hat{\mathbf{W}}_\ell\right \Vert_F^2\left( \prod_{l=1}^{\ell-1} \left \Vert \mathbf{W}_l\right \Vert_F^2 \right)\Vert \mathbf{x}\Vert^2\! +\!\Vert \mathbf{W}_\ell \Vert_F^2 \left \Vert \mathbf{W}_{\ell-1}\sigma\left( \cdots \mathbf{W}_{2}\sigma\left( \mathbf{W}_{1}\mathbf{x}\right)\right)\!-\!
    \hat{\mathbf{W}}_{\ell-1}\sigma\left( \cdots \hat{\mathbf{W}}_{2}\sigma\left( \hat{\mathbf{W}}_{1}\mathbf{x}\right)\right)
    \right \Vert^2\nonumber \\
    &\leq \sum_{l=1}^\ell \left \Vert \mathbf{W}_l \!-\! \hat{\mathbf{W}}_l\right \Vert_F^2\left( \prod_{l^\prime \neq l} \left \Vert \mathbf{W}_l\right \Vert_F^2 \right)\Vert \mathbf{x}\Vert^2\! 
\end{align}
\hrule
\end{figure*}


\section{Proof of Proposition \ref{prop1}} \label{app2}
Since $\rho$ only affects the first term of the upper bound in (\ref{upper}), $\left\Vert \mathbf{W}_l\!-\!\hat{\mathbf{W}}_l \right \Vert_F^2$, and it does not impact the second term, we retain the second term while approximating only the first term for improved precision. By defining $d_l=\left\Vert \mathbf{W}_l-\hat{\mathbf{W}}_l \right \Vert_F^2$ and combining the adopted pruning strategy, we have 
\begin{align}\label{apb1}
    \mathbb{E}\left[ d_l\right] = \mathbb{E}\left[ \sum_{k=1}^{\lfloor(1-\rho)M_l \rfloor} w_{(k)}^2\right]=  \sum_{k=1}^{\lfloor(1-\rho)M_l \rfloor} \mathbb{E}\left[x_{(k)}^2\right],
\end{align}
where the expectation is taken over the DNN weights, $w_{(k)}$ represents the element with the $k$-th smallest absolute value in matrix $\mathbf{W}_l$ and $x_{(k)}\triangleq \vert w_{(k)} \vert$. Before calculating the expectations in (\ref{apb1}), we need a probability distribution that fits the weights in $\mathbf{W}_l$. According to \cite{isik}, the Laplacian distribution with zero mean can be a good fit for pretrained network weights. Hence, any $w_k$ in matrix $\mathbf{W}_l$ is assumed Laplacian distributed with its absolute value $x_k=\vert w_k \vert $ follows the exponential distribution with parameter $\lambda_l$, i.e., $x_k \sim \text{Exp}(\lambda_l)$. Moreover, based on the definition of $x_{(k)}$, the sequence $\{x_{(k)}\}_{k=1}^{M_l}$ is an ordered version of $\{x_k\}_{k=1}^{M_l}$. Building upon the theory of order statistics in \cite{order}, we rewrite $x_{(k)}$ into the statistically equivalent form
\begin{align}
    x_{(k)} = \frac{1}{\lambda_l} \sum_{j=1}^k \frac{z_j}{M_l-k+1},
\end{align}
where $z_j$ are independent and identically distributed standard exponential random variable. Then, we have
\begin{align}\label{apb2}
    \mathbb{E}\left[x_{(k)}^2\right]&=\mathbb{E}\left[ \left(\frac{1}{\lambda_l} \sum_{j=1}^k \frac{z_j}{M_l-k+1}\right)^2\right]\nonumber \\
    &=\frac{2}{\lambda_l^2}\sum_{j=1}^k \frac{1}{(M_l-j+1)^2}\nonumber \\
    &\quad+\frac{1}{\lambda_l^2}\sum_{j=1}^k\sum_{j^\prime \neq j}^k \frac{1}{(M_l-j+1)(M_l-j^\prime +1)}.
\end{align}
Plugging (\ref{apb2}) into (\ref{apb1}), we arrive at the exact value of $\mathbb{E}\left[ d_l\right]$. However, it still exhibits a complicated form and hinders further design. Note that $\mathbb{E}\left[ d_l\right]$ is only related to the pruning ratio $\rho$. Hence, we define the following function 
\begin{align}
    f(\rho)\!\triangleq\!\!\!\!\!\sum_{k=1}^{\lfloor(1\!-\!\rho)M_l \rfloor}\!\!\left(\sum_{j=1}^k\! \frac{1}{(M_l\!-\!j\!+\!1)^2}\!+\!\!\left( \sum_{j=1}^k\! \frac{1}{M_l\!-\!j\!+\!1}\right)^2\!\right)\!\!,
\end{align}
and $\mathbb{E}\left[ d_l\right]= \frac{1}{\lambda_l^2}f(\rho)$. Then, our goal is to find an accurate approximation with a concise form for $f(\rho)$. 

Defining $\Delta \rho \triangleq\frac{1}{M_l}$, we have 
\begin{align}\label{apb3}
    f(\rho+\Delta\rho)-f(\rho)=&-\sum_{j=1}^{\lfloor(1-\rho)M_l \rfloor} \frac{1}{(M_l-j+1)^2}\nonumber \\
    &-\left( \sum_{j=1}^{\lfloor(1-\rho)M_l \rfloor} \frac{1}{M_l-j+1}\right)^2.
\end{align}
For the first term on RHS of (\ref{apb3}), we have
\begin{align}
    \sum_{j=1}^{\lfloor(1-\rho)M_l \rfloor} \frac{1}{(M_l-j+1)^2}&\leq \sum_{j=1}^{\lfloor(1-\rho)M_l \rfloor} \frac{1}{(M_l-j+1)(M_l-j)}\nonumber \\
    &=\sum_{j=1}^{\lfloor(1-\rho)M_l \rfloor} \frac{1}{M_l-j}- \frac{1}{M_l-j+1}\nonumber \\
    &=\frac{1}{\lceil \rho M_l \rceil}-\frac{1}{M_l}.
\end{align}
As for the second term on RHS of (\ref{apb3}), we adopt the approximation in Eq. (0.131) of \cite{int} and get
\begin{align}
    \left( \sum_{j=1}^{\lfloor(1-\rho)M_l \rfloor} \frac{1}{M_l-j+1}\right)^2\approx \left(\log\left( \frac{M_l}{\rho M_l+1}\right)\right)^2.
\end{align}
Now, according to the above results and by definition,  we can calculate the derivative of $f(\rho)$ with respect to $\rho$ as
\begin{align}
    f^\prime (\rho)&=\lim_{\Delta\rho\to 0}\frac{f(\rho+\Delta\rho)-f(\rho)}{\Delta \rho }=\lim_{M_l \to \infty}\frac{f(\rho+\Delta\rho)-f(\rho)}{\Delta \rho }\nonumber \\
    &\approx \lim_{M_l \to \infty} - \frac{M_l}{\lceil \rho M_l \rceil}-M_l\left(\log\left( \frac{M_l}{\rho M_l+1}\right)\right)^2+1\nonumber \\
    &= \frac{\rho-1}{\rho}-M_l\log^2 \rho.
\end{align}
Moreover, noting that $f(1)=0$ and by performing indefinite integration, we have
\begin{align}\label{apb4}
    f(\rho)&\approx \!-2M_l \rho \!-\!M_l \rho \log^2\rho \!+\!2M_l \rho \log \rho \!+\!\rho\!-\!\log \rho \!+\!2M_l \!-\!1\nonumber \\
    &\approx M_l \left(2-\rho-\rho(\log\rho -1)^2\right),
\end{align}
where the second approximation holds when $M_l\gg 1$. 
Then, using its expectation, we can approximate $d_l$ by
\begin{align}\label{eq43}
    d_l \approx \frac{M_l}{\lambda^2} \left( 2-\rho -\rho\left( \log \rho -1\right)^2\right).
\end{align}
Due to the law of large numbers, as the number of parameters increases, the error in this approximation gradually decreases, leading to a more accurate approximation. Substituting (\ref{eq43}) into (\ref{upper}), we obtain (\ref{approx})
and complete the proof.


\section{Proof of Lemma \ref{lemma4}} \label{app3}
To tackle with the nonconvex objective, we introduce a new variable $t$ to replace $P_{\text{C}}$ and it follows $t=\frac{1}{\log_2\left(1+\frac{g P_\text{C}}{BN_0}\right)}$. Furthermore, we express $P_\text{C}$ as a function with respect to $t$, i.e., $P_{\text{C}}=\frac{BN_0}{g}\left(2^{\frac{1}{t}}-1\right)$, and the problem in (\ref{pro5}) is equivalently reformulated as
\begin{align}\label{proa1}
\mathop{\text{minimize}}_{t,\nu_{\text{e}}} \enspace& \frac{A_1 BN_0}{g}\left(2^{\frac{1}{t}}-1\right)t+\kappa A_2 \nu_{\text{e}}^2\nonumber \\
\text{subject to}\enspace& A_1 t+\frac{A_2}{\nu_{\text{e}}}\leq T_2,\nonumber\\
&t\geq t_{\min},\enspace \text{C}6,
\end{align}
where $t_{\min}\triangleq \frac{1}{\log_2\left(1+\frac{g P_{\max}}{BN_0}\right)}$. Now, the problem in (\ref{proa1}) exhibits a convex form and can be optimally solved by checking the KKT conditions. To this end, we express the Lagrangian function of (\ref{proa1}) as 
\begin{align}
    \mathcal{L}(t,\nu_{\text{e}})=& \frac{A_1 BN_0}{g}\left(2^{\frac{1}{t}}\!-\!1\right)t\!+\!\kappa A_2 \nu_{\text{e}}^2\!+\!\mu_1\left(A_1 t\!+\!\frac{A_2}{\nu_{\text{e}}}\!-\!T_2\right)\nonumber \\
    &+\mu_2(t_{\min}-t)+\mu_3(\nu_{\text{e}}-\nu_{\max}),
\end{align}
where $\mu_1$, $\mu_2$, and $\mu_3$ are Lagrange multipliers associated with the constraints in (\ref{proa1}), respectively. The KKT conditions can be listed as follows
\begin{align}\label{apc0}
    &\frac{\partial \mathcal{L}(t,\nu_{\text{e}})}{\partial t}\!=\! \frac{A_1 BN_0}{g}\left(2^{\frac{1}{t}}\!-\!\mathrm{ln}2 \frac{2^{\frac{1}{t}}}{t}\!-\!1\right)+\mu_1 A_1-\mu_2\!=\!0, \\ \label{apc46}
    &\frac{\partial \mathcal{L}(t,\nu_{\text{e}})}{\partial \nu_{\text{e}}}= 2\kappa A_2 \nu_{\text{e}}-\frac{\mu_1 A_2}{\nu_{\text{e}}^2}+\mu_3=0, \\ \label{apc47}
    &\mu_1\left(A_1 t\!+\!\frac{A_2}{\nu_{\text{e}}}\!-\!T_2\right)=0, \\ \label{apc48}
    &\mu_2(t_{\min}-t)=0, \\ \label{apc49}
    &\mu_3(\nu_{\text{e}}-\nu_{\max})=0, \\ \label{apc50}
    &\mu_1,\mu_2,\mu_3\geq 0.
\end{align}
We start with the equality in (\ref{apc49}). If $\mu_2>0$, we have $t=t_{\min}$. On the other hand, if $\mu_2=0$, the value of $t$ can be determined by equality (\ref{apc46}), which $\mu_1$ is an unknown constant. Combining the two cases, we arrive at the general expression in (\ref{eq25}). Similar to the above steps, by checking the equations in (\ref{apc47}) and (\ref{apc50}), we can express the optimal $\nu_{\text{e}}^*$ in (\ref{eq24}). To further determine the value of $\mu_1$, we substitute the optimal $t^*$ and $\nu_{\text{e}}^*$ into equation (\ref{apc48}). Since the RHS of (\ref{apc48}) is monotonic, we can find the unique $\mu_1$ via the bisection method. The proof completes.

\bibliographystyle{IEEEtran}
\bibliography{IEEEabrv,reference}

\begin{thebibliography}{10}
\providecommand{\url}[1]{#1}
\csname url@samestyle\endcsname
\providecommand{\newblock}{\relax}
\providecommand{\bibinfo}[2]{#2}
\providecommand{\BIBentrySTDinterwordspacing}{\spaceskip=0pt\relax}
\providecommand{\BIBentryALTinterwordstretchfactor}{4}
\providecommand{\BIBentryALTinterwordspacing}{\spaceskip=\fontdimen2\font plus
\BIBentryALTinterwordstretchfactor\fontdimen3\font minus
  \fontdimen4\font\relax}
\providecommand{\BIBforeignlanguage}[2]{{%
\expandafter\ifx\csname l@#1\endcsname\relax
\typeout{** WARNING: IEEEtran.bst: No hyphenation pattern has been}%
\typeout{** loaded for the language `#1'. Using the pattern for}%
\typeout{** the default language instead.}%
\else
\language=\csname l@#1\endcsname
\fi
#2}}
\providecommand{\BIBdecl}{\relax}
\BIBdecl

\bibitem{editor}
J.~Jin, K.~Yu, J.~Kua, N.~Zhang, Z.~Pang, and Q.-L. Han, ``Cloud-fog
  automation: Vision, enabling technologies, and future research directions,''
  \emph{IEEE Trans. Ind. Informat.}, vol.~20, no.~2, pp. 1039--1054, Feb. 2024.

\bibitem{icps1}
C.~Lu \emph{et~al.}, ``Real-time wireless sensor-actuator networks for
  industrial cyber-physical systems,'' \emph{Proc. IEEE}, vol. 104, no.~5, pp.
  1013--1024, May 2016.

\bibitem{icps0}
K.~Zhang \emph{et~al.}, ``Advancements in industrial cyber-physical systems: An
  overview and perspectives,'' \emph{IEEE Trans. Ind. Informat.}, vol.~19,
  no.~1, pp. 716--729, Jan. 2023.

\bibitem{xw}
W.~Xu \emph{et~al.}, ``Edge learning for {B5G} networks with distributed signal
  processing: Semantic communication, edge computing, and wireless sensing,''
  \emph{IEEE J. Sel. Topics Signal Process.}, vol.~17, no.~1, pp. 9--39, Jan.
  2023.

\bibitem{icps2}
T.~Zonta \emph{et~al.}, ``Predictive maintenance in the industry 4.0: A
  systematic literature review,'' \emph{Comput. \& Ind. Eng.}, vol. 150, p.
  106889, Dec. 2020.

\bibitem{editor1}
L.~Lyu, J.~Jin, S.~Rajasegarar, X.~He, and M.~Palaniswami, ``Fog-empowered
  anomaly detection in {IoT} using hyperellipsoidal clustering,'' \emph{IEEE
  Internet Things J.}, vol.~4, no.~5, pp. 1174--1184, Oct. 2017.

\bibitem{icps3}
A.~Bousdekis, K.~Lepenioti, D.~Apostolou, and G.~Mentzas, ``A review of
  data-driven decision-making methods for industry 4.0 maintenance
  applications,'' \emph{Electronics}, vol.~10, no.~7, p. 828, 2021.

\bibitem{zhu2023pushing}
G.~Zhu \emph{et~al.}, ``Pushing {AI} to wireless network edge: An overview on
  integrated sensing, communication, and computation towards {6G},'' \emph{Sci.
  China Inf. Sci.}, vol.~66, no.~3, pp. 1--19, Feb. 2023.

\bibitem{wcm}
W.~Shi, W.~Xu, X.~You, C.~Zhao, and K.~Wei, ``Intelligent reflection enabling
  technologies for integrated and green {Internet-of-Everything} beyond {5G}:
  Communication, sensing, and security,'' \emph{IEEE Wireless Commun.},
  vol.~30, no.~2, pp. 147--154, Apr. 2023.

\bibitem{intelledge}
G.~Zhu, D.~Liu, Y.~Du, C.~You, J.~Zhang, and K.~Huang, ``Toward an intelligent
  edge: Wireless communication meets machine learning,'' \emph{IEEE Commun.
  Mag.}, vol.~58, no.~1, pp. 19--25, Jan. 2020.

\bibitem{xu2023toward}
W.~Xu, Y.~Huang, W.~Wang, F.~Zhu, and X.~Ji, ``Toward ubiquitous and
  intelligent {6G} networks: From architecture to technology,'' \emph{Sci.
  China Inf. Sci.}, vol.~66, no.~3, p. 130300, Feb. 2023.

\bibitem{zhyang}
Z.~Yang, M.~Chen, W.~Saad, C.~S. Hong, and M.~Shikh-Bahaei, ``Energy efficient
  federated learning over wireless communication networks,'' \emph{IEEE Trans.
  Wireless Commun.}, vol.~20, no.~3, pp. 1935--1949, Mar. 2021.

\bibitem{gomore}
J.~Yao, Z.~Yang, W.~Xu, M.~Chen, and D.~Niyato, ``{GoMORE}: Global model reuse
  for resource-constrained wireless federated learning,'' \emph{IEEE Wireless
  Commun. Lett.}, vol.~12, no.~9, pp. 1543--1547, Sept. 2023.

\bibitem{imperfect}
J.~Yao, Z.~Yang, W.~Xu, D.~Niyato, and X.~You, ``Imperfect {CSI}: A key factor
  of uncertainty to over-the-air federated learning,'' \emph{IEEE Wireless
  Commun. Lett.}, vol.~12, no.~12, pp. 2273--2277, Dec. 2023.

\bibitem{sciencechina}
W.~Shi, J.~Yao, J.~Xu, W.~Xu, L.~Xu, and C.~Zhao, ``Empowering over-the-air
  personalized federated learning via {RIS},'' \emph{Sci. China Inf. Sci.},
  2024.

\bibitem{edgeind}
F.~Foukalas and A.~Tziouvaras, ``Edge artificial intelligence for industrial
  internet of things applications: An industrial edge intelligence solution,''
  \emph{IEEE Ind. Electron. Mag.}, vol.~15, no.~2, pp. 28--36, Jun. 2021.

\bibitem{shao}
J.~Shao and J.~Zhang, ``Communication-computation trade-off in
  resource-constrained edge inference,'' \emph{IEEE Commun. Mag.}, vol.~58,
  no.~12, pp. 20--26, Dec. 2020.

\bibitem{flop}
J.~Lee, H.~Lee, and W.~Choi, ``Wireless channel adaptive {DNN} split inference
  for resource-constrained edge devices,'' \emph{IEEE Commun. Lett.}, vol.~27,
  no.~6, pp. 1520--1524, Jun. 2023.

\bibitem{szbi}
J.~Yan, S.~Bi, and Y.-J.~A. Zhang, ``Optimal model placement and online model
  splitting for device-edge co-inference,'' \emph{IEEE Trans. Wireless
  Commun.}, vol.~21, no.~10, pp. 8354--8367, Oct. 2022.

\bibitem{exit0}
E.~Li, L.~Zeng, Z.~Zhou, and X.~Chen, ``Edge {AI}: On-demand accelerating deep
  neural network inference via edge computing,'' \emph{IEEE Trans. Wireless
  Commun.}, vol.~19, no.~1, pp. 447--457, Jan. 2020.

\bibitem{exit}
Z.~Liu, Q.~Lan, and K.~Huang, ``Resource allocation for multiuser edge
  inference with batching and early exiting,'' \emph{IEEE J. Sel. Areas
  Commun.}, vol.~41, no.~4, pp. 1186--1200, Apr. 2023.

\bibitem{shao2}
J.~Shao, Y.~Mao, and J.~Zhang, ``Task-oriented communication for multidevice
  cooperative edge inference,'' \emph{IEEE Trans. Wireless Commun.}, vol.~22,
  no.~1, pp. 73--87, Jan. 2023.

\bibitem{lanqiao}
Q.~Lan, Q.~Zeng, P.~Popovski, D.~Gündüz, and K.~Huang, ``Progressive feature
  transmission for split classification at the wireless edge,'' \emph{IEEE
  Trans. Wireless Commun.}, vol.~22, no.~6, pp. 3837--3852, Jun. 2023.

\bibitem{ISCCmag}
D.~Wen, X.~Li, Y.~Zhou, Y.~Shi, S.~Wu, and C.~Jiang, ``Integrated
  sensing-communication-computation for edge artificial intelligence,''
  \emph{IEEE Internet Things Mag.}, vol.~7, no.~4, pp. 14--20, Jul. 2024.

\bibitem{fliu}
F.~Liu \emph{et~al.}, ``Integrated sensing and communications: Toward
  dual-functional wireless networks for {6G} and beyond,'' \emph{IEEE J. Sel.
  Areas Commun.}, vol.~40, no.~6, pp. 1728--1767, Jun. 2022.

\bibitem{zyhe}
Z.~He \emph{et~al.}, ``Full-duplex communication for {ISAC}: Joint beamforming
  and power optimization,'' \emph{IEEE J. Sel. Areas Commun.}, vol.~41, no.~9,
  pp. 2920--2936, Sept. 2023.

\bibitem{gdyu}
Y.~He, G.~Yu, Y.~Cai, and H.~Luo, ``Integrated sensing, computation, and
  communication: System framework and performance optimization,'' \emph{IEEE
  Tran. Wireless Commun.}, vol.~23, no.~2, pp. 1114--1128, Feb. 2024.

\bibitem{ISCC}
D.~Wen, P.~Liu, G.~Zhu, Y.~Shi, J.~Xu, Y.~C. Eldar, and S.~Cui, ``Task-oriented
  sensing, computation, and communication integration for multi-device edge
  {AI},'' \emph{IEEE Trans. Wireless Commun.}, vol.~23, no.~3, pp. 2486--2502,
  Mar. 2024.

\bibitem{AirISCC}
D.~Wen, X.~Jiao, P.~Liu, G.~Zhu, Y.~Shi, and K.~Huang, ``Task-oriented
  over-the-air computation for multi-device edge {AI},'' \emph{IEEE Trans.
  Wireless Commun.}, vol.~23, no.~3, pp. 2039--2053, Mar. 2024.

\bibitem{zmzhuang}
Z.~Zhuang, D.~Wen, Y.~Shi, G.~Zhu, S.~Wu, and D.~Niyato, ``Integrated
  sensing-communication-computation for over-the-air edge {AI} inference,''
  \emph{IEEE Trans. Wireless Commun.}, vol.~23, no.~4, pp. 3205--3220, Apr.
  2024.

\bibitem{mview}
X.~Chen, K.~B. Letaief, and K.~Huang, ``On the view-and-channel aggregation
  gain in integrated sensing and edge {AI},'' \emph{IEEE J. Sel. Areas
  Commun.}, vol.~42, no.~9, pp. 2292--2305, Sept. 2024.

\bibitem{tvt}
X.~Jiao, D.~Wen, G.~Zhu, W.~Jiang, W.~Luo, and Y.~Shi, ``Task-oriented
  over-the-air computation for edge-device co-inference with balanced
  classification accuracy,'' \emph{IEEE Trans. Veh. Techn.}, vol.~73, no.~11,
  pp. 17\,818--17\,823, Nov. 2024.

\bibitem{green}
Y.~Mao, X.~Yu, K.~Huang, Y.-J. Angela~Zhang, and J.~Zhang, ``Green edge {AI}: A
  contemporary survey,'' \emph{Proc. IEEE}, vol. 112, no.~7, pp. 880--911, Jul.
  2024.

\bibitem{eeiscc}
S.~Liu, D.~Wen, D.~Li, Q.~Chen, G.~Zhu, and Y.~Shi, ``Energy-efficient optimal
  mode selection for edge {AI} inference via integrated
  sensing-communication-computation,'' \emph{IEEE Trans. Mobile Comput.},
  vol.~23, no.~12, pp. 14\,248--14\,262, Dec. 2024.

\bibitem{pimrc}
M.~Krouka, A.~Elgabli, C.~B. Issaid, and M.~Bennis, ``Energy-efficient model
  compression and splitting for collaborative inference over time-varying
  channels,'' in \emph{Proc. IEEE Annu. Int. Symp. Pers., Indoor Mobile Radio
  Commun. (PIMRC)}, 2021, pp. 1173--1178.

\bibitem{peixi}
P.~Liu, G.~Zhu, S.~Wang, W.~Jiang, W.~Luo, H.~V. Poor, and S.~Cui, ``Toward
  ambient intelligence: Federated edge learning with task-oriented sensing,
  computation, and communication integration,'' \emph{IEEE J. Sel. Topics
  Signal Process.}, vol.~17, no.~1, pp. 158--172, Feb. 2023.

\bibitem{imprun}
P.~Molchanov, A.~Mallya, S.~Tyree, I.~Frosio, and J.~Kautz, ``Importance
  estimation for neural network pruning,'' in \emph{Proc. IEEE/CVF Conf.
  Comput. Vis. Pattern Recog. (CVPR)}, 2019, pp. 11\,256--11\,264.

\bibitem{weightprun}
H.~Li, A.~Kadav, I.~Durdanovic, H.~Samet, and H.~P. Graf, ``Pruning filters for
  efficient convnets,'' \emph{arXiv preprint arXiv:1608.08710}, 2016.

\bibitem{isik}
B.~Isik, T.~Weissman, and A.~No, ``An information-theoretic justification for
  model pruning,'' in \emph{Proc. Int. Conf. Artif. Intell. Statist.}, 2022,
  pp. 3821--3846.

\bibitem{kappa}
Y.~Mao, J.~Zhang, and K.~B. Letaief, ``Dynamic computation offloading for
  mobile-edge computing with energy harvesting devices,'' \emph{IEEE J. Sel.
  Areas Commun.}, vol.~34, no.~12, pp. 3590--3605, Dec. 2016.

\bibitem{quan}
Y.~Wang, Y.~Xu, Q.~Shi, and T.-H. Chang, ``Quantized federated learning under
  transmission delay and outage constraints,'' \emph{IEEE J. Sel. Areas
  Commun.}, vol.~40, no.~1, pp. 323--341, Jan. 2022.

\bibitem{quanyao}
J.~Yao, W.~Xu, Z.~Yang, X.~You, M.~Bennis, and H.~V. Poor, ``Wireless federated
  learning over resource-constrained networks: Digital versus analog
  transmissions,'' \emph{IEEE Trans. Wireless Commun.}, vol.~23, no.~10, pp.
  14\,020--14\,036, Oct. 2024.

\bibitem{cm}
Z.~Liu, Q.~Lan, A.~E. Kalør, P.~Popovski, and K.~Huang, ``Over-the-air
  multi-view pooling for distributed sensing,'' \emph{IEEE Trans. Wireless
  Commun.}, vol.~23, no.~7, pp. 7652--7667, Jul. 2024.

\bibitem{theorycm}
J.~Sokolić, R.~Giryes, G.~Sapiro, and M.~R.~D. Rodrigues, ``Robust large
  margin deep neural networks,'' \emph{IEEE Trans. Signal Process.}, vol.~65,
  no.~16, pp. 4265--4280, Aug. 2017.

\bibitem{bnb}
D.~R. Morrison, S.~H. Jacobson, J.~J. Sauppe, and E.~C. Sewell,
  ``Branch-and-bound algorithms: A survey of recent advances in searching,
  branching, and pruning,'' \emph{Discrete Optimization}, vol.~19, pp. 79--102,
  2016.

\bibitem{golden}
Z.~Xie, Z.~Lin, and W.~Chen, ``Power and rate adaptive pushing over fading
  channels,'' \emph{IEEE Trans. Wireless Commun.}, vol.~20, no.~10, pp.
  6436--6450, Oct. 2021.

\bibitem{swang}
G.~Li, S.~Wang, J.~Li, R.~Wang, X.~Peng, and T.~X. Han, ``Wireless sensing with
  deep spectrogram network and primitive based autoregressive hybrid channel
  model,'' in \emph{Proc. IEEE Int Workshop Signal Process. Adv. Wireless
  Commun. (SPAWC)}, 2021, pp. 481--485.

\bibitem{vtc}
X.~Ye, Y.~Sun, D.~Wen, G.~Pan, and S.~Zhang, ``End-to-end delay minimization
  based on joint optimization of {DNN} partitioning and resource allocation for
  cooperative edge inference,'' in \emph{Proc. IEEE Veh. Techn. Conf.
  (VTC2023-Fall)}, 2023, pp. 1--7.

\bibitem{order}
A.~R{\'e}nyi, ``On the theory of order statistics,'' \emph{Acta Math. Acad.
  Sci. Hung}, vol.~4, no.~2, pp. 48--89, 1953.

\bibitem{int}
I.~S. Gradshteyn and I.~M. Ryzhik, \emph{Table of Integrals, Series, and
  Products}.\hskip 1em plus 0.5em minus 0.4em\relax Academic press, 2014.

\end{thebibliography}

\end{document}